\newtheorem{theorem}{Theorem}[section]
\newtheorem{lemma}[theorem]{Lemma}
\newtheorem{corollary}[theorem]{Corollary}
\title{Attractive Ellipsoid Sliding Mode Observer Design for State of Charge Estimation of Lithium-ion Cells}
\author{	Anirudh Nath\emph{},
	Raghvendra Gupta\emph{},
	Rohit Mehta\emph{},
	Supreet Singh Bahga\emph{},\\
	Amit Gupta\emph{}\\ 
	Department of Mechanical Engineering\\
Indian Institute of Technology Delhi\\
Hauz Khas, New Delhi 110016, India.\\
Email: anirudh.nath88@gmail.com;raghvendrag6@gmail.com;\\rohit0149@gmail.com;bahga@mech.iitd.ac.in;agupta@mech.iitd.ac.in\\
and\\
Shubhendu Bhasin\emph{},\\
Department of Electrical Engineering\\
Indian Institute of Technology Delhi\\
Hauz Khas, New Delhi 110016, India.\\
Email: sbhasin@ee.iitd.ac.in
}
\date{}
\begin{document}
\maketitle

\begin{abstract}
This work investigates the real-time estimation of the state-of-charge (SoC) of Lithium-ion (Li-ion) cells for reliable, safe and efficient utilization. A novel attractive ellipsoid based sliding-mode observer (AESMO) algorithm is designed to estimate the SoC in real-time. The algorithm utilizes standard equivalent circuit model of a Li-ion cell and provides reliable and efficient SoC estimate in the presence of bounded uncertainties in the battery parameters as well as exogenous disturbances. The theoretical framework of the observer design is not limited to the SoC estimation problem of Li-ion cell but applicable to a wider class of nonlinear systems with both matched and mismatched uncertainties. The main advantage of the proposed observer is to provide a fast and optimal SoC estimate based on minimization over the uncertainty bound. The proposed method is experimentally tested and evaluated using the hybrid pulse power characterization test (HPPC) and urban dynamometer driving schedule (UDDS) test data, which demonstrate its effectiveness and feasibility.
\end{abstract}

\textbf{Keywords} State-of-Charge (SoC), lithium-ion (Li-ion) cell, equivalent circuit model (ECM), sliding mode observers (SMO), attractive ellipsoid method AEM), linear matrix inequality (LMI).
\section{Introduction}

Lithium-ion (Li-ion) cells are ubiquitous energy storage sources which provide a promising solution to the global future energy needs. In comparison to other battery technologies \cite{3,4}, Li-ion cells provide several advantages such as excellent energy-to-weight ratio, no memory effect, and low self-discharge rate in an unused state. All these favourable characteristics in conjunction with rapidly reducing costs have established Li-ion cells as the indispensable component for a wide variety of applications in the energy sector, especially in automotive, smart-grid and aerospace industries \cite{1,2}.\par

Important issues associated with the use of Li-ion cells, including reliability, efficiency, and longevity, demand an efficient battery management system (BMS) capable of monitoring critical internal states of Li-ion cells such as the state of charge (SoC), state of health, and energy density, etc. \cite{1,5,6}. To be specific, the SoC is a vital indicator of the actual amount of usable charge and energy content of the Li-ion cell under operation, and requires monitoring to control the extent of charging and discharging to avoid overcharge or over-discharge. Due to nonlinear physics, it is not possible to directly measure SoC using external electrical signals and thus needs to be estimated. The simplest method for estimating the SoC is the ampere-hour counting, an open-loop technique that requires precise knowledge of the initial SoC, which is typically not available and a poor initial guess often leads to accumulation of errors \cite{10}. Another important method of SoC estimation, the open-circuit voltage (OCV) method is an uncomplicated procedure but not useful for online computation since it requires a long relaxation time for accurate and precise measurement of the OCV. Black-box estimation methods \cite{7,8,9} have drawbacks like overfitting, extensive training, difficult online adaptation, and high computational cost \cite{10}.\par

\begin{figure*}
\centering
	\includegraphics[width=15cm]{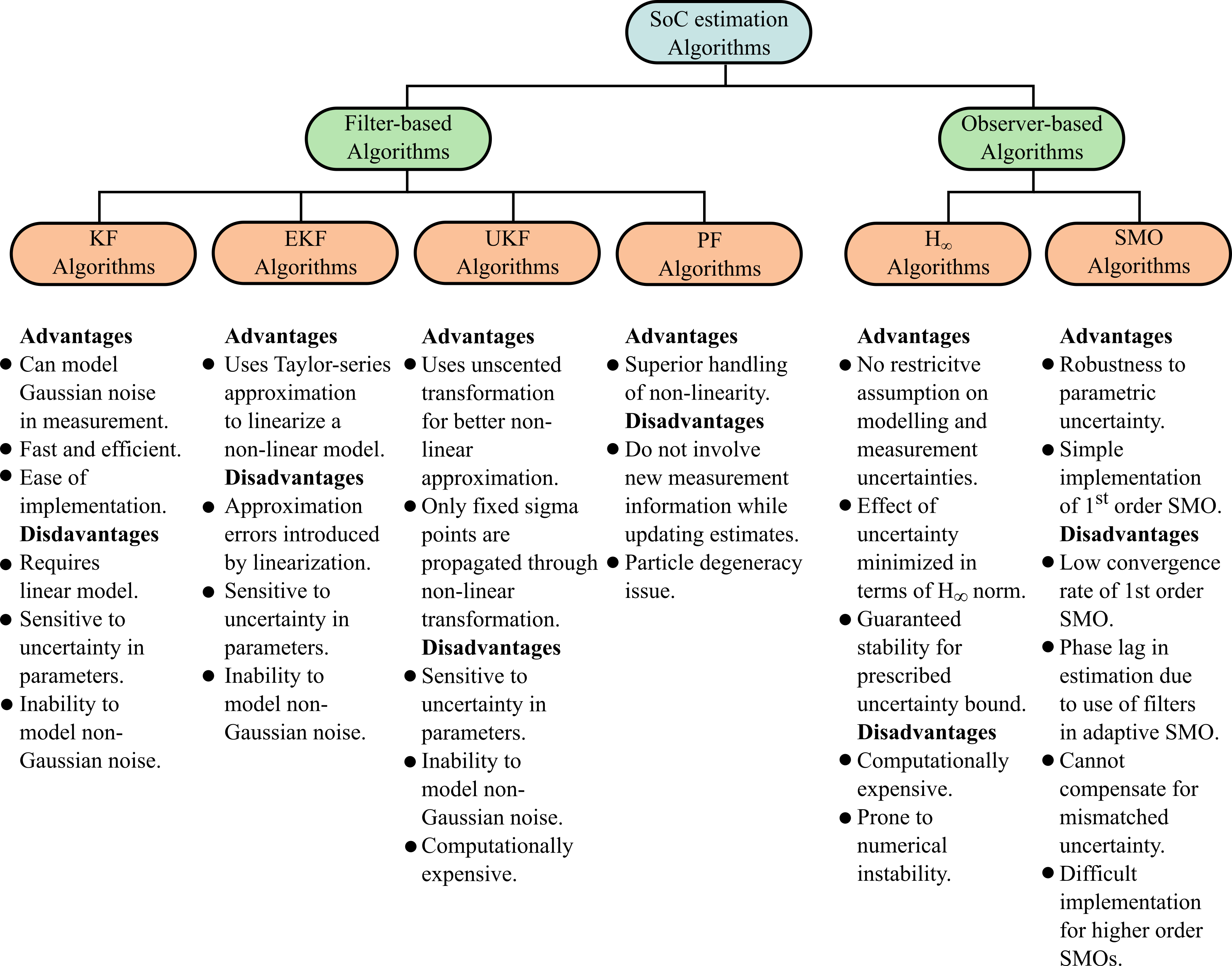}
	\caption{Various filter-based and observer-based SoC estimation approaches.}\label{FIG_demerit}
\end{figure*}

Model-based SoC estimation techniques can be broadly classified into two main classes, namely, the filter-based estimation and the observer-based estimation techniques \cite{19}. A summary of the various model-based SoC estimation algorithms along with their limitations is pictorially depicted in Fig. \ref{FIG_demerit}. KF based approaches have been extensively applied for SoC estimation \cite{11}. The requirement of linear input-output relation leads to the nonlinear characteristic of the SoC-voltage relationship being lost, thereby limiting the effectiveness of this algorithm \cite{12,13}.  The use of Taylor series in the extended Kalman filter (EKF) based approaches to linearize the nonlinear characteristics may often result in erroneous SoC estimation \cite{13,14,15}. Linearization is avoided in unscented Kalman filter (UKF) based approaches \cite{16,19}. However, UKF based methods are sensitive to inaccuracies in the initial conditions and unknown disturbances \cite{16}. The particle filtering (PF) methods can more effectively deal with the characteristics of the Li-ion cell but also have certain limitations as provided in Fig. \ref{FIG_demerit} \cite{11,16,17,18}.\par

Observer-based techniques can address limitations associated with the aforementioned filter-based SoC algorithms. To mention some of the most important works,  one may refer to SoC estimation based on reduced-order observer technique \cite{20}, model reference adaptive observer \cite{21}, proportional-integral observer \cite{46}, unknown input observer \cite{47} and the nonlinear observers designed in \cite{22,23}. Several observer-based algorithms are complicated due to the augmentation of online parameter identification algorithms based on SoC \cite{20} that can be affected by external factors such as temperature, thermal parameters and ageing effects \cite{11}. Robust observer-based SoC algorithms that utilize simplistic mathematical models of the Li-ion cell and information about the uncertainty bounds are designed to address these issues. The most important variants of robust observer-based SoC estimation schemes in the literature are based on $H_{\infty }$ filtering technique \cite{27,28} and sliding mode observer (SMO) methods \cite{30}.  As summarized in Fig. \ref{FIG_demerit}, the $H_{\infty }$ filter-based SoC estimation algorithms are robust to modelling inaccuracies and bounded disturbances, but demand high computational power and implementation cost \cite{11}.\par

The SMO-based algorithms can be classified as (i) constant \cite{31,35,36,39} and adaptive (time-varying) switching gain \cite{31,34} and (ii) first-order \cite{33,34,35,36} or second-order (based on the order of state equations) \cite{37,38}, as discussed in \cite{31}. While first-order SMOs are simple in their design and implementation \cite{33,34,35} the second-order SMOs have been shown to have higher accuracy in SoC estimation \cite{37,38}. Similarly, an adaptive SMO with time-varying switching gain provides superior performance in SoC estimation as compared to their constant gain counterpart. However, in addition to the implementation of this algorithm being difficult, the use of low-pass filters further adds to the complexity and the implementation cost \cite{31}. Furthermore, the adaptive SMO based SoC estimation can be poor due to the phase lag introduced by the use of filters. \par

The attractive ellipsoid method presents an efficient robust control strategy based on the invariant ellipsoid method in the numerically efficient linear matrix inequality (LMI) framework \cite{40,411,42,43}. The main motivation behind the current work is to propose a robust observer-based SoC estimation algorithm that can address the problems of the existing filter-based and observer-based approaches. Here a novel first order SMO is proposed which utilizes a simple equivalent circuit model (ECM) of the Li-ion cells with fast convergence and avoids the limitations of existing adaptive SMOs as discussed above. The contributions of the proposed observer-based algorithm is summarized below. The main theoretical contribution of the proposed attractive ellipsoid sliding mode observer (AESMO) is that this design applies to a wide class of uncertain nonlinear affine systems with both matched and mismatched uncertainty, including bounded exogenous disturbances. This method ensures a guaranteed convergence of SoC estimation error trajectories to a bounded ellipsoid of a minimal size where the observer gain matrix is obtained by solving a convex optimization problem with linear matrix inequality (LMI) constraints. It also avoids requirement of high-end computational resources for its implementation unlike the $H_{\infty }$ and Kalman filter-based algorithms as discussed before. Another important feature of this design is that the rate of convergence of the estimated trajectories can be altered by adjusting the value of a design parameter. Thus the issue of slow convergence of existing first-order SMOs can be improved.

\section{MATHEMATICAL MODELLING OF LI-ION CELL}
In this section, the overall framework of the mathematical model of Li-ion cell, parameter identification and the SoC estimation technique will be presented.  

\subsection{Equivalent Circuit Model of Li-ion cell}

\begin{figure}
\centering
	\includegraphics[width=8.5cm]{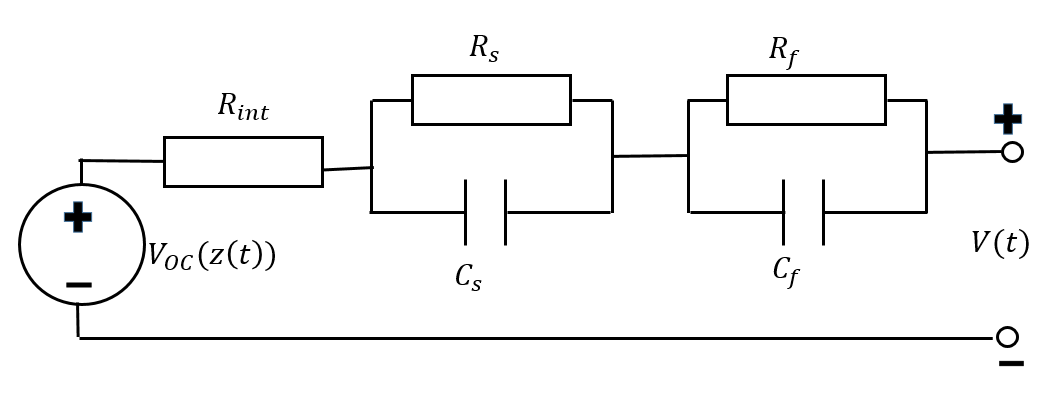}
	\caption{Block diagram representation of the equivalent circuit model of Li-ion cell.}\label{FIG_1}
\end{figure}

Here, an equivalent circuit model (ECM) with dual polarization RC circuit is considered for the purpose of SOC estimation. The ECM provides a good trade-off between complexities and precision of the electrical behaviour of the Li-ion cell \cite{1}. Various versions of ECM models exist in the literature as mentioned in \cite{2,15,25,26,27,30}. The presented observer design can accomodate modelling errors and parametric uncertainties. A schematic of the Li-ion ECM is shown in Fig. \ref{FIG_1}, where $R_{int}$ is the internal ohmic resistance of the Li-ion cell which accounts for the energy losses of the Li-ion cell during its operation. The RC networks $(R_{s},C_{s})$ and $(R_{f},C_{f})$ are, respectively, used to describe the short-term as well as long-term transient behaviour of the Li-ion cell. The mathematical relationship of the terminal voltage of the Li-ion cell which is expressed as linear sum of the OCV ($V_{\mathrm{oc}}(\mathrm{z(t)})$), the voltage drops across the internal resistance as well as the RC-modules (i.e., $R_{int} I(t)$, $V_{RC_1}(t)$ and $V_{RC_2}(t)$, respectively) are given as
\begin{equation}\label{eq1}
V(t)=V_{\mathrm{oc}}(\mathrm{z(t)})-V_{RC_1}(t)-V_{RC_2}(t)-R_{int} I(t)
\end{equation}
The voltage drops across the RC modules representing the slow and fast polarization characteristics of the Li-ion cell, respectively, are given by 
\begin{equation}\label{eq2}
\dot{V}_{RC_1}(t)=-\frac{1}{R_{s} C_{s}} V_{RC_1}(t)+\frac{1}{C_{s}} I(t)
\end{equation}
\begin{equation}\label{eq3}
\dot{V}_{RC_2}(t)=-\frac{1}{R_{f} C_{f}} V_{RC_2}(t)+\frac{1}{C_{f}} I(t)
\end{equation}
The SoC of Li-ion cell, $z(t)$ is related to the current and the nominal capacity of the Li-ion cell as follows:
\begin{equation}\label{eq4}
\dot{z}(t) =-\frac{1}{Q} I(t) 
\end{equation}
where $Q$ denotes the total capacity of the cell. The SoC can be expressed in terms of the terminal voltages and other voltage drops by substituting the expression for the current, $I(t)$ from \eqref{eq1} in \eqref{eq4} as
\begin{equation}\label{eq5}
\dot{z}(t) =-\frac{1}{R_{int} Q}\left(V_{\mathrm{OC}}(z(t))-V_{RC_1}(t)-V_{RC_2}(t)-V(t)\right)
\end{equation}
Differentiating the terminal voltage, $V(t)$ with respect to time and assuming negligible change in current in between the sampling instants (i.e. $\frac{dI(t)}{dt}\simeq 0$) as in \cite{11} and further utilizing \eqref{eq2} and \eqref{eq3}, one can obtain
\begin{multline}\label{eq6}
\begin{aligned}
\dot{V}(t)=& \dot{z} \frac{\partial{V}_{\mathrm{OC}}(\mathrm{z})}{\partial z}-\dot V_{RC_2}-\dot V_{RC_1}
-R_{int} \frac{dI}{dt}\\
=&-\frac{I}{Q} \frac{\partial{V}_{\mathrm{OC}}(\mathrm{z})}{\partial z} +\frac{1}{R_{f} C_{f}} V_{RC_2}-\frac{I}{C_{f}} \\
&+\frac{1}{R_{s} C_{s}} V_{RC_1}-\frac{I}{C_{s}} 
\end{aligned}
\end{multline}
Now all the dynamical equations described above can be summarised as follows
\begin{equation}\label{eq7}
\begin{aligned}
\dot{V} =& a_2 V_{OC}(z)-a_2 V + (a_3-a_2) V_{RC_2} \\
&-\left[b_{1} \times  \frac{d{V}_{\mathrm{OC}}(\mathrm{z})}{dz}+b_{2}+b_{3}-a_2R_{int}\right] I \\
\dot{z}=&- b_1 \bar R \left(V_{\mathrm{OC}}(\mathrm{z})-V-V_{RC_1}-V_{RC_2}\right) \\
\dot{V}_{RC_1}=&-a_{2} V_{RC_1}+b_{2} I\\
\dot{V}_{RC_2}=&-a_{3} V_{RC_2}+b_{3} I
\end{aligned}
\end{equation}
where $a_{2}\triangleq \frac{1}{R_{s} C_{s}},  a_{3}\triangleq\frac{1}{R_{f} C_{f}},
$
$b_{1}\triangleq\frac{1}{Q}, b_{2}\triangleq\frac{1}{C_{s}}$, $  b_{3}\triangleq\frac{1}{C_{f}}$ and $\bar R\triangleq\frac{1}{R_{int}}$. The dynamical equations in \eqref{eq7} can be expressed as
\begin{equation}\label{sys}
    \dot x(t)=(A+\Delta A(t))x(t)+Bu(t)+\phi(x(t),u(t))+Dd(t)
\end{equation}
and the output equation as
\begin{equation}\label{op}
    y(t)=Cx(t),~~~y\in \mathbb{R}
\end{equation}
where the state $x(t)\triangleq[V(t)~z(t)~V_{RC_1}(t)~V_{RC_2}(t)]^T\in \mathbb{R}^4$, input $u(t) \triangleq I(t)\in \mathbb{R}$, exogenous disturbance $d(t)\in \mathbb{R}$, $A\triangleq \left[\begin{array}{cccc}
-a_{2} & a_{2} \alpha_{1} & 0 & a_3-a_2\\
-b_1 \bar R & -\alpha_{1}b_1 \bar R & -b_1 \bar R & -b_1 \bar R \\
0 & 0 & -a_{2} & 0\\
0 & 0 & 0 & -a_3
\end{array}\right]$, 
$B \triangleq \left[\begin{array}{c}
-b_{2}  -b_{3} +a_2 R_{int}\\
0 \\
b_2\\
b_3
\end{array}\right]$,\\
$C=[1~0~0~0]$,
$\phi \triangleq \left[\begin{array}{c}
a_2 V_{OC}(z)-a_{2} \alpha_{1} z-b_1\frac{\partial{V}_{\mathrm{OC}}(\mathrm{z})}{\partial z}I\\
-b_1 \bar R V_{OC}(z)+\alpha_1 b_1 \bar Rz \\
0\\
0
\end{array}\right]$ and $D\triangleq [1~1~1~1]^T$. The system matrix in \eqref{sys} is partitioned into a nominal matrix, $A$ and an uncertain  matrix, 
$\Delta A(t)=\left[\begin{array}{cccc}
-\Delta a_{2} & \Delta a_{2} \alpha_{1} & 0 & \Delta a_3-\Delta a_2\\
-b_1 \Delta {\bar R} & - \alpha_{1}b_1 \Delta {\bar R} & -b_1 \Delta {\bar R} & -b_1 \Delta {\bar R} \\
0 & 0 & -\Delta a_{2} & 0\\
0 & 0 & 0 & -\Delta a_3
\end{array}\right]$. 
The elements of $\Delta A\left( t\right)$ vary within a specified interval $a_2\in[a_{2_{min}},a_{2_{max}}]$, $a_3\in[a_{3_{min}},a_{3_{max}}]$ and $\Delta {\bar R}\in[\Delta {\bar R}_{min},\Delta {\bar R}_{max}]$, where $a_{2_{min}}$, $a_{3_{min}}$, $\Delta {\bar R}_{min}$ and $a_{2_{max}}$, $a_{3_{max}}$, $\Delta {\bar R}_{max}$, respectively, are the known minimum and maximum values of the uncertain parameters.

\subsection{Parameter estimation}
In the present work, the influence of ambient temperature, humidity, etc., has been neglected. Hence, it is assumed that the parametric values of $R_{int}$, $R_{s}$, $C_{s}$, $R_{f}$ and $C_{f}$ in the ECM can be considered as dependent on SoC only. For the sake of simplicity, the above ECM parameters are assumed to be independent of the current direction (i.e. any hysteresis phenomenon during charging and discharging process is ignored).\par
\begin{figure}
\centering
	\includegraphics[width=9cm,height=4.5cm]{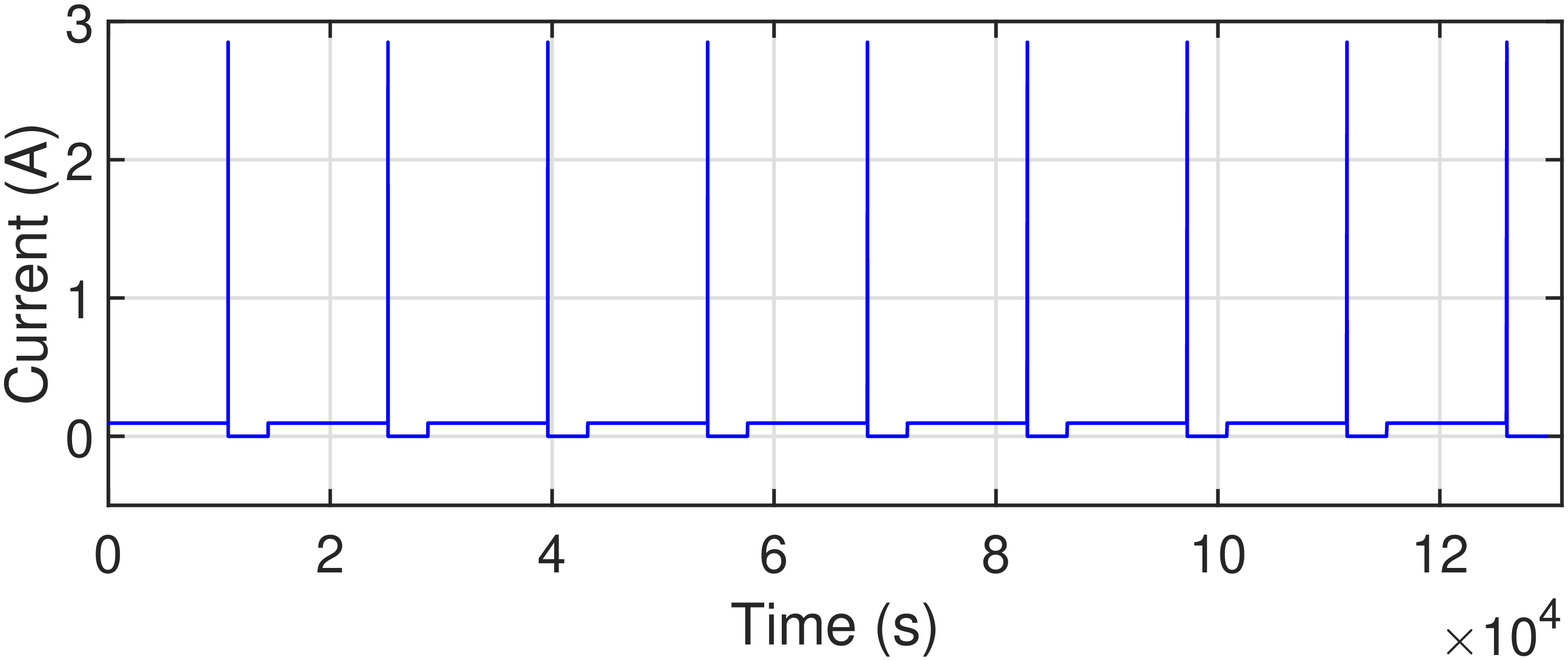}
	\caption{Current profile for pulse discharge test utilized for parameter identification.}
	\label{FIG_hppcident}
\end{figure}
To calculate parameters at a particular SoC accurately, the discharging current profile employed is shown in Fig. \ref{FIG_hppcident}. For this purpose, Samsung  INR18650-29E Li-ion cells with nominal capacity of $2.85$ Ah and Arbin  LBT21084 Battery Cycler are used. The experimental current profile consists of a $100$ mA discharge current from $100\%$ SoC (fully charged condition) till the time the SoC is $90\%$. After this, a pulse of $2.85$ A current is applied for $10$ s followed by rest of $4410$ s. This cycle is repeated for each step until the SoC drops to $10\%$ as depicted in Fig. \ref{FIG_hppcident}. The parameter identification procedure is adopted from \cite{45} that utilizes nonlinear least squares algorithm to  determine the parameters, $R_{int}$, $R_{s}$, $C_{s}$, $R_{f}$ and $C_{f}$ of the ECM using the experimental data of the applied current and corresponding terminal voltage of the Li-ion cell. The results of the estimated ECM parameters are provided in Table \ref{tab1}. \par
\begin{table}[!h]
\renewcommand{\arraystretch}{0.1}
\centering \caption{Parameter identification.}\vspace{0.1cm}
\label{tab1} 
\begin{tabular}{cccccc}
\hline 
\hline
\vspace{0.1cm}
SoC & $R_{int}~(m\Omega)$ & $R_{s}~(m\Omega)$ & $C_{s}~(kF)$ & $R_{f}~(m\Omega)$ & $C_{f}~(kF)$\\ \hline \vspace{0.1cm}
$0.9$ & $32.00$ & $15.30$ & $1.935$ & $32.00$ & $16.78$\\ \vspace{0.1cm}
$0.8$ & $32.90$ & $23.00$ & $1.425$ & $20.40$ & $14.81$\\ \vspace{0.1cm}
$0.7$ & $30.20$ & $25.60$ & $1.401$ & $22.90$ & $10.86$\\ \vspace{0.1cm}
$0.6$ & $30.60$ & $25.60$ & $1.541$ & $71.10$ & $3.89$\\ \vspace{0.1cm}
$0.5$ & $30.60$ & $14.40$ & $2.031$ & $19.30$ & $16.43$\\ \vspace{0.1cm}
$0.4$ & $32.00$ & $15.10$ & $2.114$ & $25.20$ & $11.82$\\ \vspace{0.1cm}
$0.3$ & $30.80$ & $14.40$ & $2.419$ & $57.00$ & $65.54$\\ \vspace{0.1cm}
$0.2$ & $32.10$ & $14.90$ & $2.084$ & $23.10$ & $11.25$\\ \vspace{0.1cm}
$0.1$ & $35.50$ & $18.20$ & $1.601$ & $69.00$ & $01.36$\\ 
\hline
\hline
\end{tabular}
\end{table}
An accurate mathematical relationship between the OCV and SoC is vital for the estimation accuracy since it captures the nonlinear dynamics of the Li-ion cell. The SoC versus OCV relationship was determined using the voltage versus time data obtained by discharging the cell from $4.2$ V to $3$ V at C/$30$ rate. The initial state of the cell was ensured to be $4.2$ V by charging it up to this value at a current of  C/$30$, followed by a constant voltage charging until the current dropped below  C/$100$. The SoC values at $4.2$ V and $3$ V were taken to be $100\%$ and $0\%$, respectively, and the intermediate SoC values were determined by coulomb counting method. In this paper, a $9^{th}$-order polynomial function is considered for representing the nonlinear relationship between the OCV and SoC using the least-squares technique \cite{15} as
\begin{multline}
    V_{oc}(z) = p_1 z^9 + p_2 z^8 + p_3 z^7 + p_4 z^6 + p_5 z^5+ p_6 z^4\\
   + p_7 z^3 +p_8 z^2+p_9 z+p_{10}
\end{multline}
where the coefficients of the function are identified as $p_1=1.937\times 10^{3}$, $p_2=-8.962\times 10^{3}$, $p_3=1.745\times 10^{4}$, $p_4=-1.860\times 10^{4}$, $p_5=1.177\times 10^{4}$, $p_6=-4.514\times 10^{3}$, $p_7=1.028\times 10^{3}$, $p_8=-133.501$, $p_9=10.0891$ and $p_{10}=3.043$. 
 The result of the polynomial function identification is illustrated in Fig. \ref{FIG_ocv} which shows a good fit with the experimental data. 
\begin{figure}
\centering
	\includegraphics[width=8.5cm,height=4.5cm]{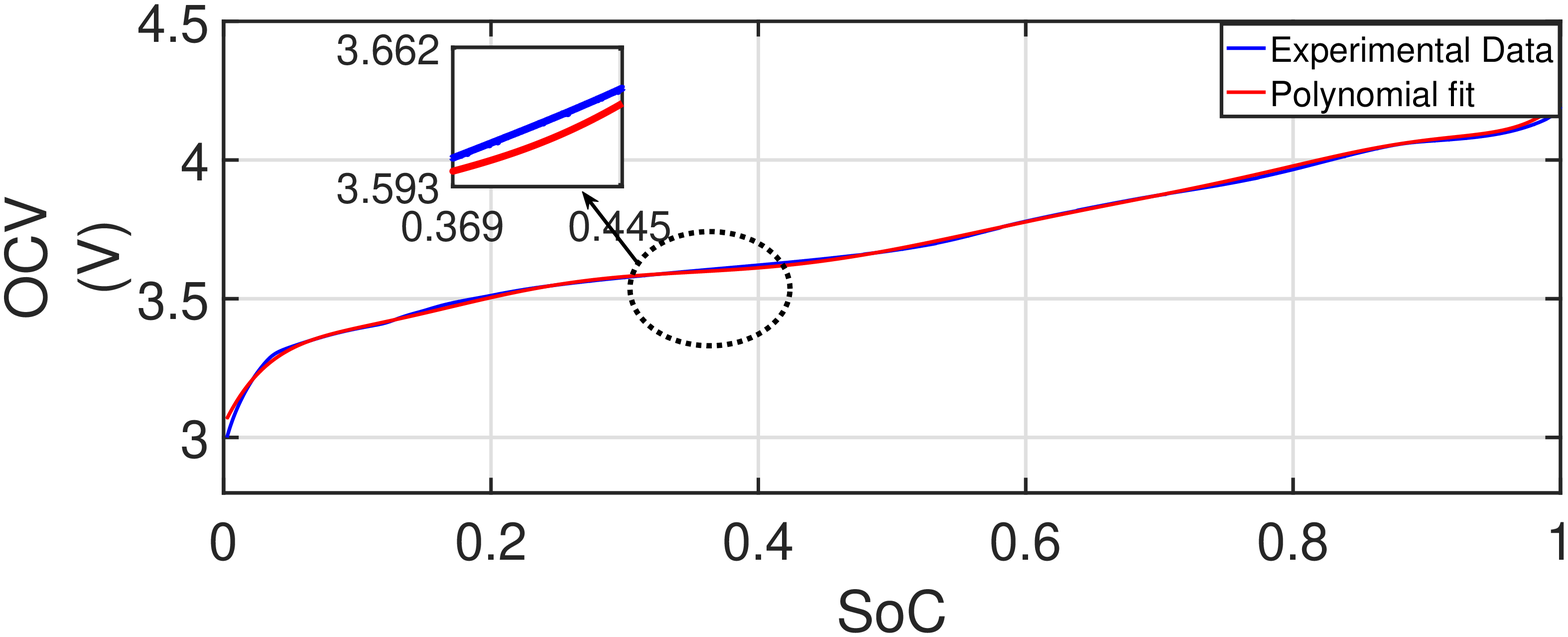}
	\caption{Open-circuit voltage vs. SoC of a Li-ion cell.}\label{FIG_ocv}
\end{figure}
\section{DESIGN OF THE ATTRACTIVE ELLIPSOID BASED SMO OBSERVER}
In this section, a novel SMO design algorithm for a class of uncertain nonlinear Lipschitz systems as provided in \eqref{sys} is presented. The design procedure involves an appropriate selection of a sliding surface in the framework of the invariant ellipsoid method \cite{43}. The proposed observer ensures that the state estimation error trajectories are ultimately confined within an attractive ellipsoid around the sliding manifold. The following assumptions are considered for the observer design procedure. \par 
Let us consider an observer with Luenberger's sliding mode structure (with the additional signum-term) as
\begin{equation}\label{ob}
\dot{\hat x}(t)=A \hat x(t)+Bu(t)+\phi(\hat x(t))+L\sigma(t)+L_s SIGN(\sigma(t))
\end{equation}
where the output error $\sigma(t)\triangleq y(t)-C\hat x(t)=C \tilde x(t) \in \mathbb{R}^m$, the observer gain matrices, $L,L_s\in \mathbb{R}^{n\times m}$, state estimate $\hat x(t)\in \mathbb{R}^n$ and the state estimation error $\tilde x(t)\triangleq x(t)-\hat x(t)\in \mathbb{R}^n$, $SIGN(\sigma)=[sign(\sigma_1),sign(\sigma_2),...,sign(\sigma_m)]^T \in \mathbb{R}^m$ and the signum function $sign(\sigma_i)$ is defined as,
\begin{equation*}
sign(\sigma_i) = \begin{cases}
1 &\text{if $\sigma_i>0$}\\
-1 &\text{if $\sigma_i<0$}\\
\in[-1,1] &\text{if $\sigma_i=0$}\\
\end{cases}
i=1,2,\dots,m
\end{equation*}
and $\sigma \in \mathbb{R}^{m}$ defines the sliding surface. Using \eqref{sys} and \eqref{ob}, the closed-loop error dynamics can be written as
\begin{multline}\label{err}
    \dot{\tilde x}(t)=A \tilde x(t)+\Delta A(t) x(t)+\Delta \phi -L\sigma(t)-L_s SIGN(\sigma(t))\\
    +Dd(t)
\end{multline}
where $\Delta \phi\triangleq \phi(x(t))-\phi(\hat x(t))$. 

The following definition, lemma and assumptions facilitate the subsequent design of a robust observer for the system in (\ref{eq7}) in an LMI framework.
\begin{lemma} 
A nonlinear mapping $\phi(x,u)$ is called Lipschitz function, if the following mathematical condition is satisfied,
\begin{equation}\label{lip}
\|\phi(x,u)-\phi(\hat{x},u)\|\leq L_{\phi}\|x-\hat{x}\|
\end{equation}
for any $(x,\hat{x}) \in \Re^{n}$ and $L_{\phi}>0$ in (\ref{lip}) denotes a Lipschitz constant \cite{411}.
\end{lemma}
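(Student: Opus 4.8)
\section*{Proof proposal}

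The first thing I would do is observe that the statement, as worded, is \emph{definitional} rather than assertoric. The phrasing ``a nonlinear mapping $\phi(x,u)$ \emph{is called} a Lipschitz function if \eqref{lip} is satisfied'' introduces terminology; it does not assert an implication of the form ``hypothesis $\Rightarrow$ conclusion'' whose validity could be established. The inequality \eqref{lip} is itself the defining property, and $L_\phi>0$ is declared to be the name given to the associated constant. Consequently there is no hypothesis to invoke, no intermediate estimate to build, and no conclusion to reach: the content of the statement is exhausted by the inequality it displays. The plan, therefore, is to record that the lemma carries no proof obligation, since a definition is not the kind of object that admits a demonstration.

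If one nonetheless wishes to say something, the only checkable point is that the condition is \emph{well posed} and coincides with the standard (global) Lipschitz condition. Here I would simply note that the two occurrences of $\|\cdot\|$ are compatible vector norms on $\Re^n$, that $L_\phi$ is required to be a single positive scalar independent of the arguments, and that \eqref{lip} is demanded to hold for every pair $(x,\hat x)\in\Re^n$. Under these reading conventions the displayed inequality is exactly the classical definition of a Lipschitz map, which is precisely why the author imports it by citation rather than deriving it. Nothing further is needed to justify the terminology.

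The genuinely nontrivial step in this paper is not the lemma but its \emph{application}: namely, verifying that the particular nonlinearity $\phi$ appearing in the error dynamics \eqref{err} (equivalently, the $\phi$ assembled from the cell model \eqref{eq7}) actually obeys \eqref{lip} on the bounded SoC operating range, so that the increment $\Delta\phi\triangleq\phi(x)-\phi(\hat x)$ can later be bounded by $L_\phi\|\tilde x\|$ in the Lyapunov/LMI analysis. That verification would proceed by bounding the Jacobian of $\phi$ over the compact set of admissible states. As the lemma is stated, however, it makes no such claim about this specific $\phi$, so that verification lies outside its scope.

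I therefore expect the ``main obstacle'' to be purely expository rather than mathematical: the statement should be read (and ideally relabelled) as a \textbf{Definition}. Once it is recognised as definitional, the correct ``proof'' is the trivial one of exhibiting \eqref{lip} as the defining condition, and the real analytical burden is deferred to the point where $\Delta\phi$ must be dominated by $L_\phi\|\tilde x\|$ in the convergence argument for the observer \eqref{ob}.
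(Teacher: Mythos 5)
You are right: the paper offers no proof of this ``lemma'' because it is a definition imported by citation, exactly as you say, and the paper likewise defers the substantive question to its remark that Assumption~4 holds since the map between $V_{OC}(z(t))$ and $z(t)$ is continuous and monotonic. Your reading matches the paper's treatment, and your observation that the only real burden is verifying \eqref{lip} for the specific $\phi$ used in the error dynamics is precisely where the paper (briefly) places it as well.
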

\textit{Assumption 1:} The states of \eqref{sys} satisfy the mathematical inequality, $\|x(t)\| \leq X_{+}$ where $X_{+}$ is a known positive constant. \newline
\textit{Assumption 2:}The uncertain matrix is bounded, $\|\Delta A(t)\|\leq \gamma$. \newline
\textit{Assumption 3:} The exogenous disturbance is also bounded as $\|d(t)\| \leq D_{+}$ where $D_{+}$ is a known positive constant.  \newline
\textit{Assumption 4:} $\phi(x,u)$ in \eqref{sys} satisfies the Lipschitz condition in Lemma 1. \newline
\textit{Assumption 5:} The pair (A,C) in \eqref{sys} and \eqref{op} is observable \cite{40}.\par \vspace{0.1cm}

Assumption 1 is a bounded input bounded output (BIBO) stability condition which is a practical consideration because the output, $V(t)$ of Li-ion cell can never be unbounded for a bounded input current, $I(t)$. The knowledge about various uncertainties can be determined based on experiments and domain knowledge. Thus Assumption 2 allows us to consider the effect of measurement noise in the state and modelling inaccuracies separately in the observer design. Since the nonlinear map between $V_{OC}(z(t))$ and $z(t)$ is continuous and monotonic, Assumption 4 is also valid for the current problem and can be verified.

\begin{theorem}
For the system (\ref{sys}) satisfying  the uncertainty bounds in Assumptions (1)-(4), if the observer \eqref{ob} with gain $L_s=\frac{\mu}{2}P^{-1}C^T$ and observer gain matrix $L$ fullfills the following matrix inequality,

\begin{equation}
    \tilde{W}\left( P,L\mid \alpha ,\varepsilon \right)=\left[ 
\begin{array}{cc}
\Xi & P \\ 
P & -\varepsilon I_{n\times n}
\end{array}
\right] <0
\end{equation}
where $\Xi\triangleq P(A-LC+\frac{\alpha}{2}I_{n \times n})+ (A-LC+\frac{\alpha}{2}I_{n \times n})^TP+\varepsilon L_{\phi}^2I_{n\times n}$, for some positive definite symmetric matrix $P=P^{T}>0$ and positive constants $\alpha>0 ,\varepsilon>0 $ and $\mu > 0$, then the state estimation error $\tilde x(t)$ converges to a bounded region
    $\|\tilde x(t)\|^2 \leq \frac{1}{\lambda_{min}(P)} \left( \frac{c}{\alpha }+O\left(e^{-\alpha t}\right) \right)$, where $c\triangleq \varepsilon \gamma^2 X_{+}^2+4D_+^2$ and $O(e^{-\alpha t})\triangleq {\tilde x}_0^TP{\tilde x}_0~ e^{-\alpha t}-\frac{c}{\alpha}e^{-\alpha t}$, where ${\tilde x}_0$ is the initial estimation error.
\end{theorem}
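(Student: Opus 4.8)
The plan is to prove the ultimate bound through a quadratic Lyapunov function together with the comparison lemma. First I would take $V(\tilde x)=\tilde x^{T}P\tilde x$ as a candidate and differentiate it along the error dynamics \eqref{err}. Using $\sigma=C\tilde x$ and collecting terms, this gives
\[
\dot V=\tilde x^{T}\!\left[P(A-LC)+(A-LC)^{T}P\right]\tilde x+2\tilde x^{T}P\,\Delta A(t)x+2\tilde x^{T}P\,\Delta\phi+2\tilde x^{T}PDd-2\tilde x^{T}PL_{s}\,SIGN(\sigma).
\]
The key role of the prescribed gain $L_{s}=\tfrac{\mu}{2}P^{-1}C^{T}$ is to render the switching contribution sign-definite: since $\tilde x^{T}C^{T}=\sigma^{T}$, the last term collapses to $-\mu\,\sigma^{T}SIGN(\sigma)=-\mu\sum_{i}|\sigma_{i}|\le 0$, so it can be discarded from the upper bound on $\dot V$. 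This is the step in which the specific sliding-mode structure of the observer is exploited, and it is why $L_{s}$ is chosen proportional to $P^{-1}C^{T}$ rather than being free.

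Next I would bound the three remaining cross terms by Young's inequality so that they assemble into the LMI. For the nonlinear term, Lemma 1 gives $\|\Delta\phi\|\le L_{\phi}\|\tilde x\|$, whence $2\tilde x^{T}P\,\Delta\phi\le\varepsilon^{-1}\tilde x^{T}P^{2}\tilde x+\varepsilon L_{\phi}^{2}\|\tilde x\|^{2}$, which reproduces exactly the $\varepsilon L_{\phi}^{2}I$ entry of $\Xi$, the off-diagonal $P$ blocks and the $-\varepsilon I$ block recovered by a Schur complement. The uncertainty term is controlled with Assumptions 1--2 through $\|\Delta A(t)x\|\le\gamma X_{+}$, and the disturbance term with Assumption 3 through $\|Dd\|\le\|D\|D_{+}=2D_{+}$; each of these splits, by Young's inequality, into a quadratic-in-$\tilde x$ remainder and a constant, contributing $\varepsilon\gamma^{2}X_{+}^{2}$ and $4D_{+}^{2}$ respectively to $c$. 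Incorporating the $\tfrac{\alpha}{2}I$ shift into $A-LC$ and grouping the quadratic part, the Schur-complement argument applied to $\tilde W(P,L\mid\alpha,\varepsilon)<0$ shows that the collected quadratic form is bounded above by $-\alpha\,\tilde x^{T}P\tilde x$, so that $\dot V\le-\alpha V+c$ with $c=\varepsilon\gamma^{2}X_{+}^{2}+4D_{+}^{2}$.

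Finally, solving the scalar differential inequality $\dot V\le-\alpha V+c$ via the comparison lemma yields $V(t)\le\tfrac{c}{\alpha}+\big(V(0)-\tfrac{c}{\alpha}\big)e^{-\alpha t}$ with $V(0)=\tilde x_{0}^{T}P\tilde x_{0}$. Using the standard bound $\lambda_{\min}(P)\|\tilde x(t)\|^{2}\le V(t)$ then delivers the stated estimate, with the transient identified as $O(e^{-\alpha t})=\tilde x_{0}^{T}P\tilde x_{0}\,e^{-\alpha t}-\tfrac{c}{\alpha}e^{-\alpha t}$ and the residual attractive ellipsoid governed by $c/(\alpha\,\lambda_{\min}(P))$; shrinking this radius is what the associated LMI optimization over $P,L$ accomplishes.

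The step I expect to be the main obstacle is the bookkeeping in the Young's-inequality stage: the auxiliary splitting parameters must be chosen so that every quadratic remainder collapses into the single $\varepsilon^{-1}P^{2}$ block that the Schur complement of $\tilde W$ produces, while the constant remainders reproduce $c$ exactly --- in particular the factor $4=\|D\|^{2}$ arising from $D=[1\ 1\ 1\ 1]^{T}$. Verifying that the switching term genuinely cancels, rather than merely dominates, the output-direction perturbation, and confirming that the shift $\tfrac{\alpha}{2}I$ is correctly absorbed into $\Xi$ to yield the prescribed decay rate $\alpha$, are the remaining delicate points.
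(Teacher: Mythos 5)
Your proposal follows essentially the same route as the paper: the quadratic Lyapunov function $V=\tilde x^{T}P\tilde x$, the choice $L_{s}=\tfrac{\mu}{2}P^{-1}C^{T}$ to make the switching term $-\mu\sum_{i}|\sigma_{i}|\le 0$, an $\varepsilon$-weighted splitting of the lumped perturbation $\xi=\Delta A\,x+\Delta\phi+Dd$ that is equivalent (via Schur complement) to the paper's augmented quadratic form in $(\tilde x,\xi)$, and the comparison lemma applied to $\dot V\le-\alpha V+c$. The only caveat is the bookkeeping you yourself flag: splitting the three perturbation terms separately produces either three copies of the $\varepsilon^{-1}P^{2}$ remainder or a factor of $3$ on the constants, a constant-level looseness that the paper's own bound $\|\xi\|^{2}\le\|\Delta Ax\|^{2}+\|\Delta\phi\|^{2}+\|Dd\|^{2}$ equally glosses over.
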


\begin{proof}
Let us consider a Lyapunov candidate function 
\begin{equation}\label{lyap}
    V(\tilde x)=\tilde x^T P \tilde x
\end{equation}
where $V$ is continuously differentiable, positive definite and radially unbounded and $P$ is a symmetric positive definite matrix \cite{40}. Taking the time derivative of \eqref{lyap} and using \eqref{err} yeilds
\begin{equation}\label{pr4}
    \dot V(\tilde x) = 2\tilde x^T P(A-LC)\tilde x
    + 2 \tilde x^T P\left( \underbrace{\Delta A x +\Delta \phi+Dd(t)}_{\xi}\right) \\
    - 2 \tilde x^T PL_s SIGN(\sigma)
\end{equation}
Choosing $L_s=\frac{\mu}{2}P^{-1}C^T$ and using the relation $ \sum_{i=1}^{m} \arrowvert \sigma_i \arrowvert \geq \|\sigma\|_2$, we can write
\begin{equation}\label{pr3}
     2 \tilde x^T PL_s SIGN(\sigma) \geq \mu \|\sigma\|
\end{equation}
\hspace{-0.1cm}Upper bounding \eqref{pr4} using \eqref{pr3}
\begin{equation}\label{eqn}
    \dot V\leq 2 \tilde x^T P(A-LC)\tilde x+2{\tilde x}^TP \xi -\mu \|\sigma\|
\end{equation}
where $\xi \triangleq \Delta Ax+\Delta \phi+Dd(t)$. Now, expressing \eqref{eqn} into symmetric form and adding and subtracting $\varepsilon I_{n \times n}$ and $\alpha V(\tilde x)$ with scalars $\varepsilon,\alpha>0$ on the left side of (\ref{pr4})
\begin{equation}\label{pr5}
    \dot V(\tilde x) = \left(
    \begin{array}{c}
\tilde x\left( t\right) \\ 
\xi \left( t\right)%
\end{array}
\right) ^{T}W  \left(
    \begin{array}{c}
\tilde x\left( t\right) \\ 
\xi \left( t\right)%
\end{array}
\right)\\
+\varepsilon \left\Vert \xi  \right\Vert ^{2}-\alpha
V\left(\tilde x\right) -\mu \|\sigma\|
\end{equation}
where $W \triangleq \left[ 
\begin{array}{cc}
P(A-LC)+(A-LC)^TP+\alpha P & P \\ 
P & -\varepsilon I_{n\times n}%
\end{array}%
\right]$. Now expanding $\|\xi\|^2 = \|\Delta A x + \Delta \phi + D d(t) \|^2$ in \eqref{pr5} and using the bounds in Assumptions (1)-(4)
\begin{eqnarray}\label{pr6}
\|\xi\|^2&=&\|\Delta A x + \Delta \phi + D d(t) \|^2\\ \nonumber
&\leq& \|\Delta A x\|^2 + \|\Delta \phi\|^2 + \|Dd(t)\|^2\\\nonumber
&\leq&\gamma^2 X_{+}^2 + L_{\phi}^2 \|\tilde x\|^2 + 4D_{+}^2 \nonumber
\end{eqnarray}
Substituting \eqref{pr6} in \eqref{pr5}, we get
\begin{equation}\label{pr7}
    \dot V(\tilde x) \leq \left(
    \begin{array}{c}
\tilde x\left( t\right) \\ 
\xi \left( t\right)%
\end{array}
\right) ^{T}
\underbrace{
\left[ 
\begin{array}{cc}
\Xi & P \\ 
P & -\varepsilon I_{n\times n}%
\end{array}%
\right]}
_{\tilde{W}\left( P,L\mid \alpha ,\varepsilon \right)}
\left( 
\begin{array}{c}
\tilde x\left( t\right) \\ 
\xi \left( t\right)%
\end{array}%
\right)-\alpha V\left(\tilde x \right) 
+\underbrace{\varepsilon \gamma^2 X_{+}^2 + 4D_{+}^2}_{c}
-\mu \|\sigma\|
\end{equation}
Now, if $\tilde{W}\left( P,L\mid \alpha ,\varepsilon \right) < 0$, then from \eqref{pr7}, 
\begin{equation}\label{pr8}
    \dot V(\tilde x) \leq -\alpha V(\tilde x)+c-\mu\|\sigma\|
\end{equation}
where $c>0$ is a positive scalar constant that depends on the bounds on uncertainty. Equation \eqref{pr8} can be further upper bounded as
\begin{equation}\label{eqnn}
    \dot V(\tilde x)\leq -\alpha V(\tilde x)+c
\end{equation}
The solution of \eqref{eqnn} can be obtained as
\begin{equation}\label{eqnnn}
    V(\tilde x)\leq V({\tilde x}_0)e^{-\alpha t} + \frac{c}{\alpha}(1-e^{-\alpha t})
\end{equation}

From \eqref{eqnnn}, one can obtain
\begin{equation}\label{limsup}
  {\lim \sup}_{t\rightarrow\infty} V\left( \tilde x\left( t\right) \right)\leq \frac{c}{\alpha}. 
\end{equation}
Further (\ref{limsup}) can be equivalently written as, 
\begin{equation}\label{eq17}
    {\lim \sup}_{t\rightarrow\infty} \tilde x^{T }\left( t\right) \left[
P_{attr}\right] \tilde x\left( t\right)\leq 1
\end{equation}
where $P_{attr}:=\frac{\alpha }{c}P$ is the ellipsoidal matrix. Hence the stability of the state estimation error dynamics is proved since the time derivative of the storage function $V(\tilde x)$ is uniformly ultimately bounded (UUB) under bounded uncertainty and disturbance. 
\end{proof}
\textit{Definition:} An ellipsoidal set $E(P_{attr}) \triangleq \{\tilde x|{\tilde x}^TP_{attr}\tilde x \leq 1 \}$ 
where the ellipsoidal matrix $P_{attr}$ is a symmetric positive definite matrix. $E(P_{attr})$ is called as an attractive ellipsoid fo the system \eqref{err} if it is a globally asymptotic attractive invariant set \cite{40}.\par
 It is to be worth mentioning at this stage that the size of $E(P_{attr})$ is minimized by solving a convex optimization problem with matrix constraints which is presented in the corollary as provided below.
\begin{corollary}\cite{411,42,43}
The optimal parameter $L^{\ast}$ for the proposed observer is computed by solving a semidefinite programming problem (SDP) as follows:
\begin{equation}\label{convex}
     \underset{P>0,L,\alpha >0,\varepsilon >0}{\text{minimize}} tr\left(P_{attr}\right)
\end{equation}
where the operator $\mathrm{tr}(\cdot)$ represents the trace operator that is operated on the matrix $P_{attr}$ satisfying the following matrix inequality,
\begin{equation}\label{lmi_ob1}
\tilde{W}\left( P_{attr},L\mid \alpha ,\varepsilon \right) <0 
\end{equation}
\end{corollary}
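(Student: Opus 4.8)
The plan is to treat the Corollary as an optimization statement layered directly on top of the preceding Theorem, so that no new stability analysis is required. First I would recall that the Theorem already establishes that feasibility of $\tilde{W}(P,L\mid\alpha,\varepsilon)<0$ guarantees uniform ultimate boundedness of $\tilde x(t)$ inside the attractive ellipsoid $E(P_{attr})$ with $P_{attr}=\tfrac{\alpha}{c}P$ and $c=\varepsilon\gamma^2 X_{+}^2+4D_{+}^2$. Hence every feasible tuple $(P,L,\alpha,\varepsilon)$ certifies such an ellipsoid, and ``designing the optimal gain'' means selecting, among all feasible tuples, the one whose guaranteed bounding set is the tightest. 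Writing the program with the ellipsoidal matrix $P_{attr}$ itself as the positive-definite decision variable (as the notation $\tilde{W}(P_{attr},L\mid\alpha,\varepsilon)$ suggests) is the natural starting point.

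Next I would dispose of the objective. Because $\mathrm{tr}(P_{attr})$ is a linear functional of the entries of $P_{attr}$, it is an admissible cost for a semidefinite program; minimizing it scalarizes the selection of the tightest ellipsoidal certificate into a single linear objective. Thus the cost side of \eqref{convex} is already in SDP form, and all of the real work falls on the constraint \eqref{lmi_ob1}.

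The main obstacle is that \eqref{lmi_ob1} is not literally an LMI. Expanding $\Xi=P_{attr}(A-LC)+(A-LC)^TP_{attr}+\alpha P_{attr}+\varepsilon L_{\phi}^2 I_{n\times n}$, the cross term $P_{attr}LC$ is a product of the two unknowns $P_{attr}$ and $L$, and $\alpha P_{attr}$ is a product of the scalar $\alpha$ with the matrix $P_{attr}$; consequently $\tilde{W}<0$ is a bilinear matrix inequality rather than a linear one. The plan to remove this bilinearity is the standard linearizing change of variables $Y\triangleq P_{attr}L$, which converts $P_{attr}LC$ into $YC$ and renders the inequality affine in $(P_{attr},Y,\varepsilon)$ once the scalar $\alpha$ is frozen. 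The remaining scalar couplings, namely $\alpha$ and the appearance of $\varepsilon$ through $c$, are handled by a one-dimensional line search (gridding) over $\alpha>0$, so that for each fixed value of $\alpha$ the residual problem is a genuine convex SDP in $(P_{attr},Y,\varepsilon)$; the optimal observer gain is then recovered as $L^{\ast}=P_{attr}^{-1}Y^{\ast}$.

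Finally I would check that this reformulation is lossless: since $P_{attr}>0$ is invertible, the map $L\mapsto Y=P_{attr}L$ is a bijection, so feasibility of the linearized SDP is equivalent to feasibility of the original bilinear inequality for the corresponding $L$, and positivity $P_{attr}=\tfrac{\alpha}{c}P>0$ is inherited automatically. I expect the genuine difficulty to be precisely this bilinear-to-linear convexification, since the nonconvex products $P_{attr}LC$ and $\alpha P_{attr}$ are exactly what separate a tractable SDP from a generically intractable bilinear feasibility problem; the trace minimization and the positive-definiteness bookkeeping are routine once the substitution $Y=P_{attr}L$ and the line search over $\alpha$ are in place.
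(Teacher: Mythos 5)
Your proposal is correct and follows essentially the same route as the paper: the paper likewise treats the corollary as a direct optimization layer over the theorem's feasibility condition $\tilde{W}<0$, and it performs exactly your linearizing substitution $Y\triangleq PL$ with $\alpha$ and $\varepsilon$ held fixed (so the BMI becomes an LMI solvable by SeDuMi/YALMIP) and recovers $L^{\ast}=P_{attr}^{\ast^{-1}}Y$ --- the paper merely splits this into the present corollary (stated by citation, without proof) and the one that follows. Your added remark that the map $L\mapsto P_{attr}L$ is a bijection, so the reformulation is lossless, is a small but worthwhile point the paper leaves implicit.
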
 
The bilinear matrix inequality (BMI) in (\ref{lmi_ob1}) is essentially nonlinear which needs to be converted to an LMI for solving the above convex optimization problem using standard LMI solvers. 
\begin{corollary}\cite{411}
The inequality in (\ref{lmi_ob1}) can be converted to LMI if the first element of $\tilde{W}\left( P,L\mid \alpha,\varepsilon \right)$, i.e., $\Xi$ is modified  by introducing a new variable $Y\triangleq PL$ as follows:
\begin{equation}\label{LMI}
\tilde{W}\left( P_{attr},L^{\ast}\mid \alpha ,\varepsilon \right)=
\left[ 
\begin{array}{cc}
\Xi & P \\ 
P & -\varepsilon I_{n\times n}
\end{array}
\right]<0,~~P>0
\end{equation}
where $\Xi=PA-YC+ A^TP-Y^TP+\alpha P+\varepsilon L_{\phi}^2I_{n\times n}$ and the optimal observer gain matrix, $L^{\ast}$ is computed as
\begin{equation}\label{o_gain}
    L^{\ast}=P_{attr}^{\ast^{-1}}Y
\end{equation}
\end{corollary}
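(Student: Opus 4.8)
The plan is to isolate the single source of nonconvexity in the matrix inequality $\tilde{W}\left(P_{attr},L\mid \alpha,\varepsilon\right)<0$ of the preceding corollary, namely the products of the two unknown matrices $P$ and $L$. First I would expand the $(1,1)$ block from its definition in the Theorem, writing $\Xi = PA + A^{T}P - PLC - C^{T}L^{T}P + \alpha P + \varepsilon L_{\phi}^{2}I_{n\times n}$. The cross terms $PLC$ and $C^{T}L^{T}P$ are bilinear in the decision variables $(P,L)$, and this is precisely what prevents the constraint from being handled directly by a semidefinite solver; every other entry of $\tilde{W}$ is already affine in the unknowns.

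Next I would introduce the linearizing change of variables $Y \triangleq PL$. Since $P=P^{T}>0$ is invertible by hypothesis, the map $L \mapsto Y$ is a bijection for each fixed $P$, so the substitution loses no feasible point and introduces no spurious ones. Substituting gives $PLC = YC$ and $C^{T}L^{T}P = C^{T}(PL)^{T} = (YC)^{T}$, so that $\Xi = PA + A^{T}P - YC - (YC)^{T} + \alpha P + \varepsilon L_{\phi}^{2}I_{n\times n}$, which is affine in $P$, $Y$ and $\varepsilon$. The off-diagonal blocks equal $P$ and the $(2,2)$ block equals $-\varepsilon I_{n\times n}$, both already linear, so the entire matrix $\tilde{W}$ becomes affine in $(P,Y,\varepsilon)$, and together with $P>0$ the strict inequality is then a genuine LMI solvable by standard tools.

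The one remaining subtlety, which I expect to be the main obstacle, is the term $\alpha P$: it is a product of the scalar $\alpha$ and the matrix $P$ and therefore stays bilinear if $\alpha$ is simultaneously a decision variable. I would dispose of this in the usual way by fixing $\alpha>0$, solving the convex feasibility/minimization problem in $(P,Y,\varepsilon)$ for that value, and performing a one-dimensional search (bisection or gridding) over $\alpha$; this is natural here because $\alpha$ also sets the convergence rate in the Theorem's bound, so the search doubles as a performance tuning. For each fixed $\alpha$ the problem is a true LMI and the minimization of $\mathrm{tr}(P_{attr})$ reduces to an SDP.

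Finally, from an optimal feasible point I would recover the gain by inverting the change of variables, $L^{\ast}=(P^{\ast})^{-1}Y^{\ast}$, written $P_{attr}^{\ast^{-1}}Y$ in the statement; positive definiteness of the optimal matrix guarantees the inverse exists, so $L^{\ast}$ is well defined, and the scaling relation $P_{attr}=\tfrac{\alpha}{c}P$ only rescales this expression by the scalar $\tfrac{\alpha}{c}$, which can be absorbed into the definition of $Y$. This closes the argument that the bilinear inequality is equivalent, under the invertible substitution $Y=PL$, to the stated LMI.
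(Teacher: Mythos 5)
Your proposal is correct and takes essentially the same route the paper relies on: the standard linearizing change of variables $Y\triangleq PL$ to remove the bilinear products $PLC$ and $C^{T}L^{T}P$, with the remaining scalar--matrix product $\alpha P$ handled by fixing $\alpha$ (and, in the paper, $\varepsilon$) and line-searching, after which the gain is recovered by inverting the substitution; the paper states this without proof, deferring to the cited attractive-ellipsoid literature. Two of your observations in fact sharpen the statement as printed: the transposed cross term should read $-(YC)^{T}=-C^{T}Y^{T}$ (the paper's $-Y^{T}P$ is dimensionally inconsistent for $m\neq n$), and the recovery formula $L^{\ast}=P_{attr}^{\ast^{-1}}Y$ is only consistent with $Y=PL$ up to the scalar $\alpha/c$ relating $P_{attr}$ to $P$, which, as you note, must be absorbed into the definition of $Y$.
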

For fixed parameters, $\alpha$ and $\varepsilon$, the matrix inequalities presented in \eqref{LMI} become linear which can be solved using MATLAB toolboxes SeDuMi and YALMIP \cite{44}.
\section{RESULTS AND DISCUSSION}
In this section experimental studies and numerical simulations are reported to validate and evaluate the proposed SoC estimation technique. At first, experimental validation of the real-time performance of the proposed AESMO is performed through HPPC and UDDS tests. Next, a comparative study with an existing robust observer is done through simulations. Subsequently, the effect of measurement noise in the current and voltage channels of the experimental system on the observer's performance is investigated. Finally, the effect of ageing in terms of variation of the internal resistance is analyzed through Monte Carlo simulations.\par
The value of the model parameters of the system in \eqref{sys} is chosen to correspond to SoC, $z(t)=0.1$ as provided in Table \ref{tab1}. It is important to note that the identified parameters of an unused Li-ion cell can not accurately represent the dynamic behaviour of the cell after repeated cycles of usage. Here, the model parameters of the ECM model in \eqref{sys} are kept constant throughout the designed experiments and simulations (irrespective of the SoC value at different operating conditions) to account for the above mentioned uncertainty. For $\mu=10^{-10}$ and a Lipschitz constant, $L_{\phi}=0.8$ choosing the design parameters as $\alpha=2\times 10^{7}$ and $\varepsilon=2\times 10^{-8}$, the SDP problem in \eqref{convex} with LMI constraints in \eqref{LMI} is solved which provided an optimal observer gain matrix, $L=[0.3645~-0.2364~2.002\times 10^{-8}~0.0217]^T$. It can be noted that the value of the observer gain is considered to be the same for all experiments and simulations.

\subsection{Experimental Validation of the SoC Estimation Technique}
\begin{figure}
\centering
	\includegraphics[width=8.5cm]{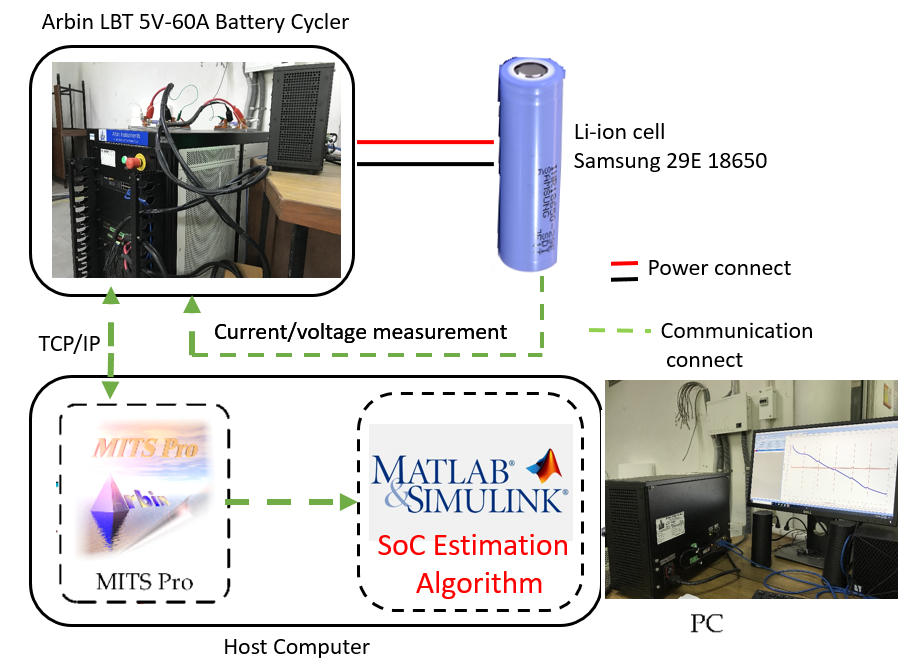}
	\caption{Schematic diagram of the battery test bench.}\label{FIG_exp}
\end{figure}
As shown in Fig. \ref{FIG_exp} high precision battery testing system, Arbin LBT21084 5V-60A Battery Cycler is utilized to carry out experiments to determine the characteristics of Li-ion cells, identify the ECM parameters of the cell, and quantify the efficacy of the proposed AESMO based SoC estimation technique. For the designed experiments Li-ion cells INR18650-29E made by Samsung SDI Co., Ltd., are utilized having a nominal voltage and nominal capacity of 3.65 V and 2.85 Ah, respectively. The current and voltage measurement data are stored in the memory of the host computer with a sampling rate of $1$ Hz. Finally, the recorded current and voltage data is imported to MATLAB/SIMULINK to be used in the SoC estimation algorithm.\par
\subsubsection{HPPC Test} 
To investigate the effectiveness of the proposed AESMO, a new HPPC test is designed with a larger number of the current discharging pulses. The corresponding current and voltage profile of the new HPPC test is depicted in Fig. \ref{FIG_hppc1}. The HPPC experimental test data obtained from the experimental setup is utilized for evaluating the efficacy of the proposed AESMO. The objective of this experiment is to investigate whether the proposed observer can provide an acceptable accuracy level of $\pm5\%$ in SoC estimation in different regions of SoC while keeping the ECM parameters fixed (as discussed before) during the entire experiment. 

\begin{figure}
\centering
\subfigure[]{
\hspace*{-0.5cm}
  \includegraphics[width=75mm,height=4.5cm]{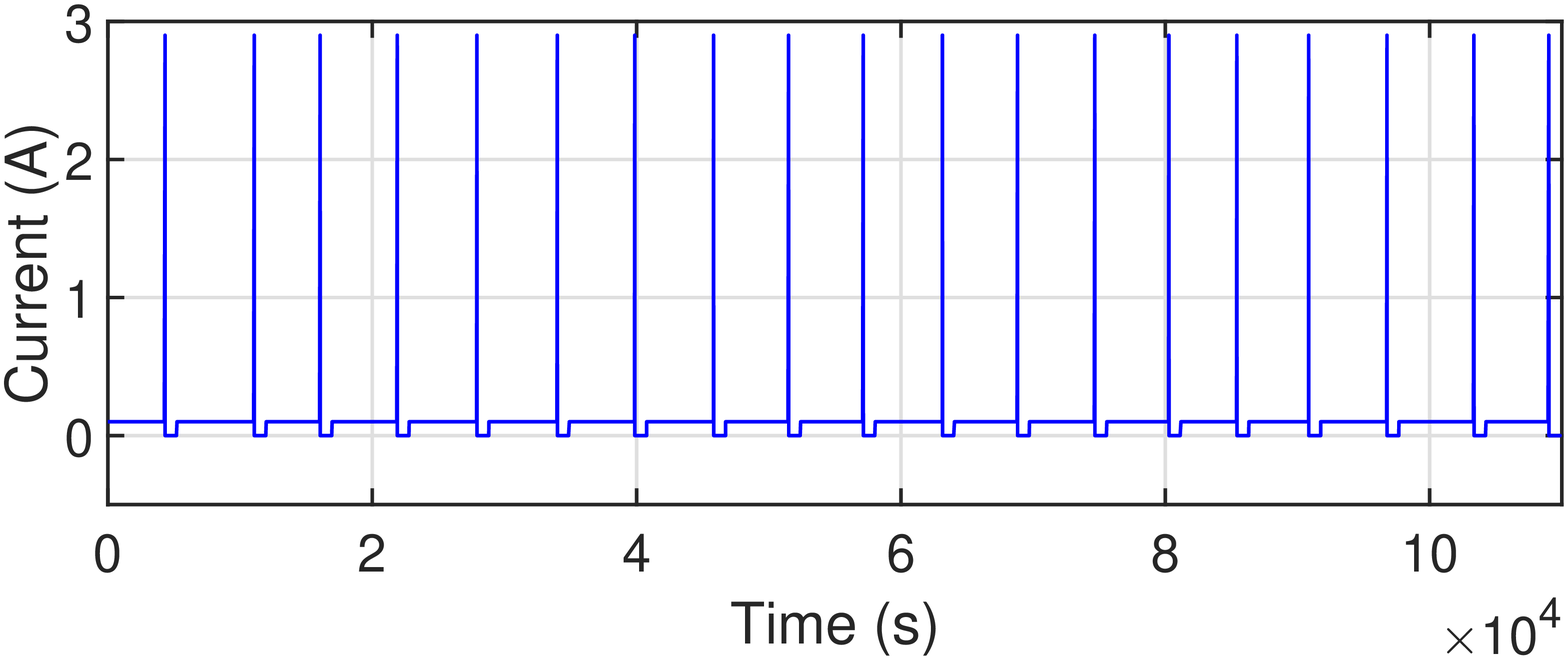}
}
\subfigure[]{
  \includegraphics[width=75mm,height=4.5cm]{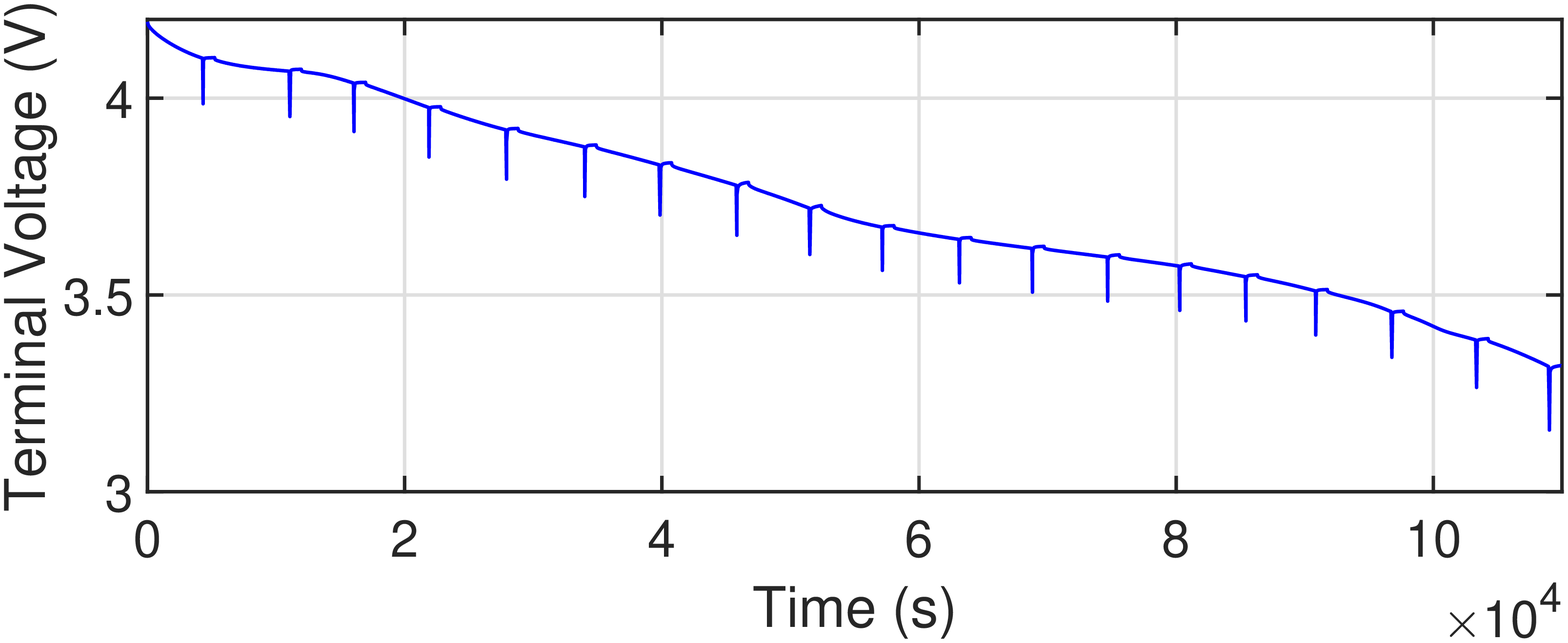}
}
\caption{Measured current and terminal voltage of HPPC test for the Li-ion cell.}
\label{FIG_hppc1}
\end{figure}

\begin{figure}
\centering
\subfigure[]{
\hspace*{-0.5cm}
  \includegraphics[width=75mm,height=4.5cm]{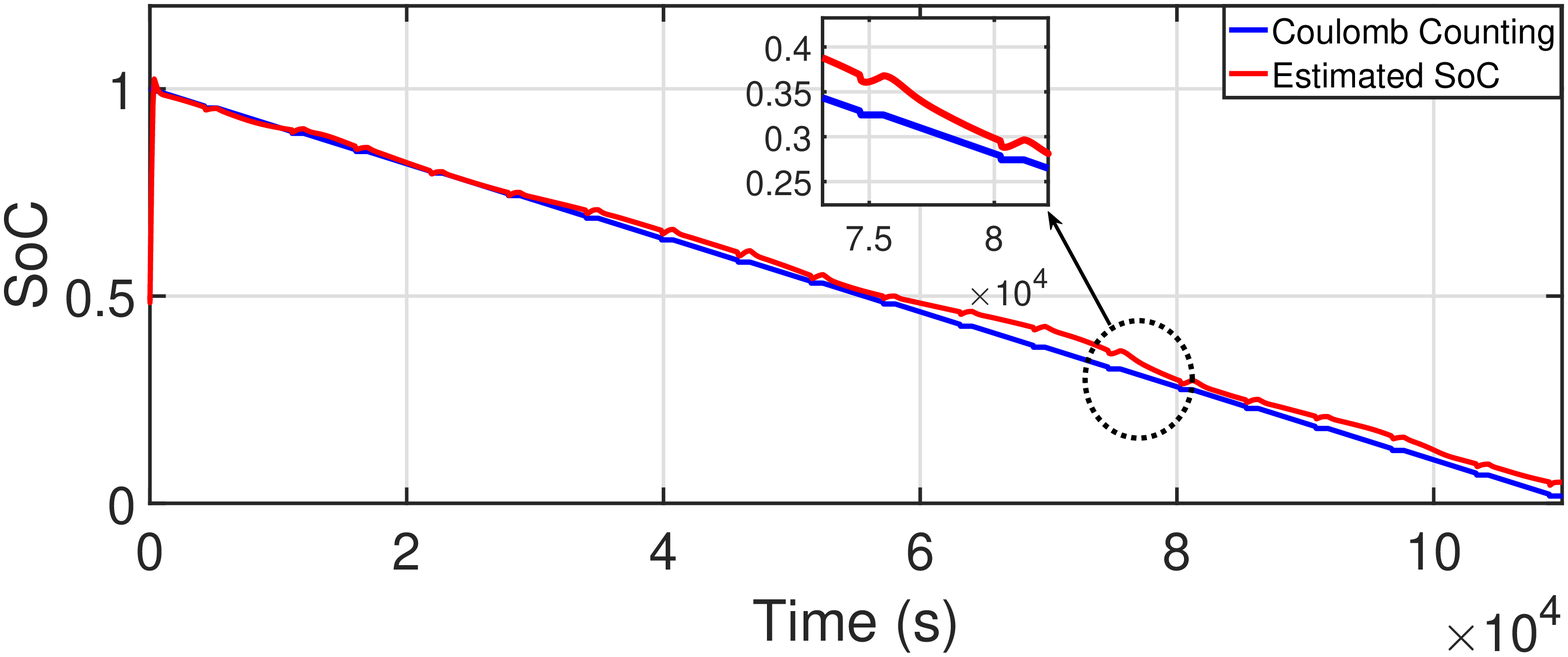}
}
\subfigure[]{
  \includegraphics[width=75mm,height=4.5cm]{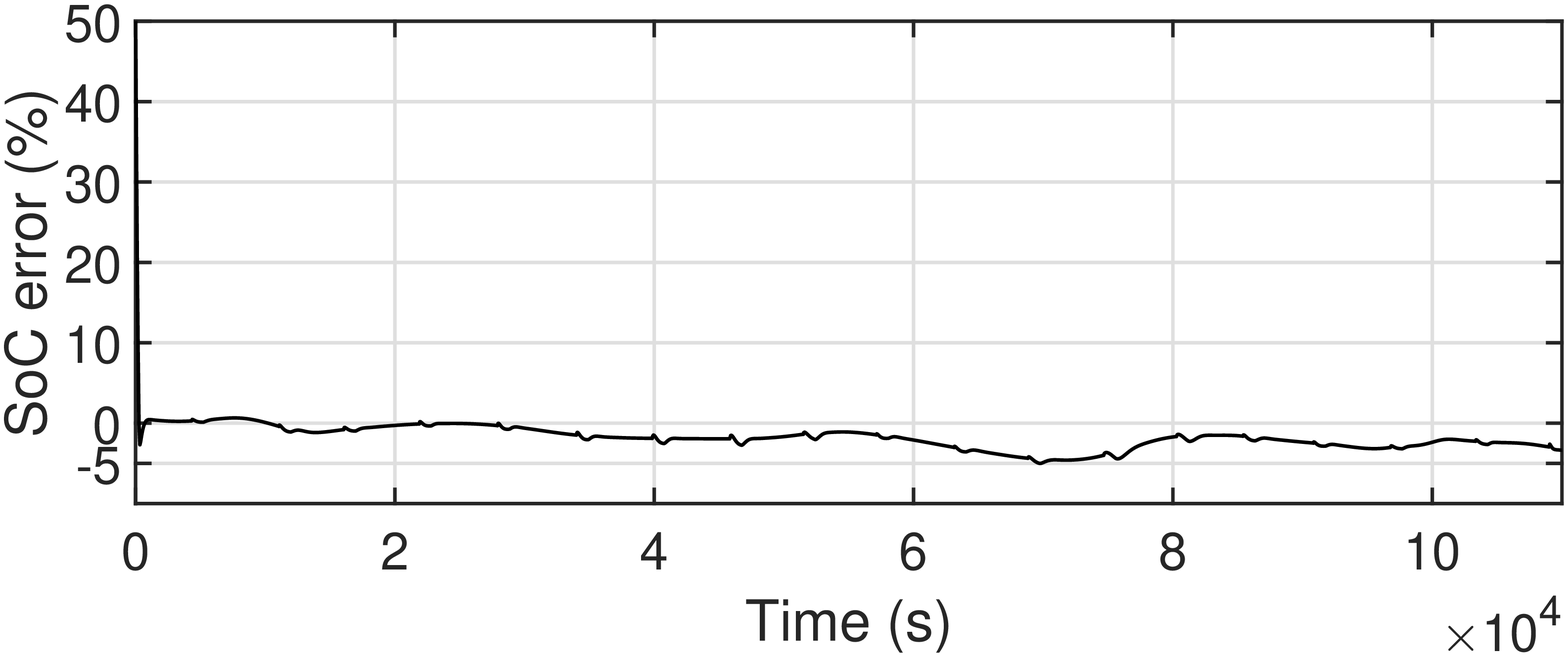}
}
\caption{SoC estimation results for HPPC test.}
\label{FIG_HPPCEXP}
\end{figure}
As illustrated in Fig. \ref{FIG_HPPCEXP}(a), one can observe that despite the error in initial SoC estimate ($z_0(t)=0.6$), the proposed AESMO can estimate the experimental SoC satisfactorily. The corresponding percentage error in SoC estimation for this case is restricted between $\pm5\%$ which validates the above fact as depicted in Fig. \ref{FIG_HPPCEXP}(b).

\subsubsection{UDDS Test:} 
\begin{figure}
\centering
\subfigure[]{
\hspace*{-0.5cm}
  \includegraphics[width=75mm,height=4.5cm]{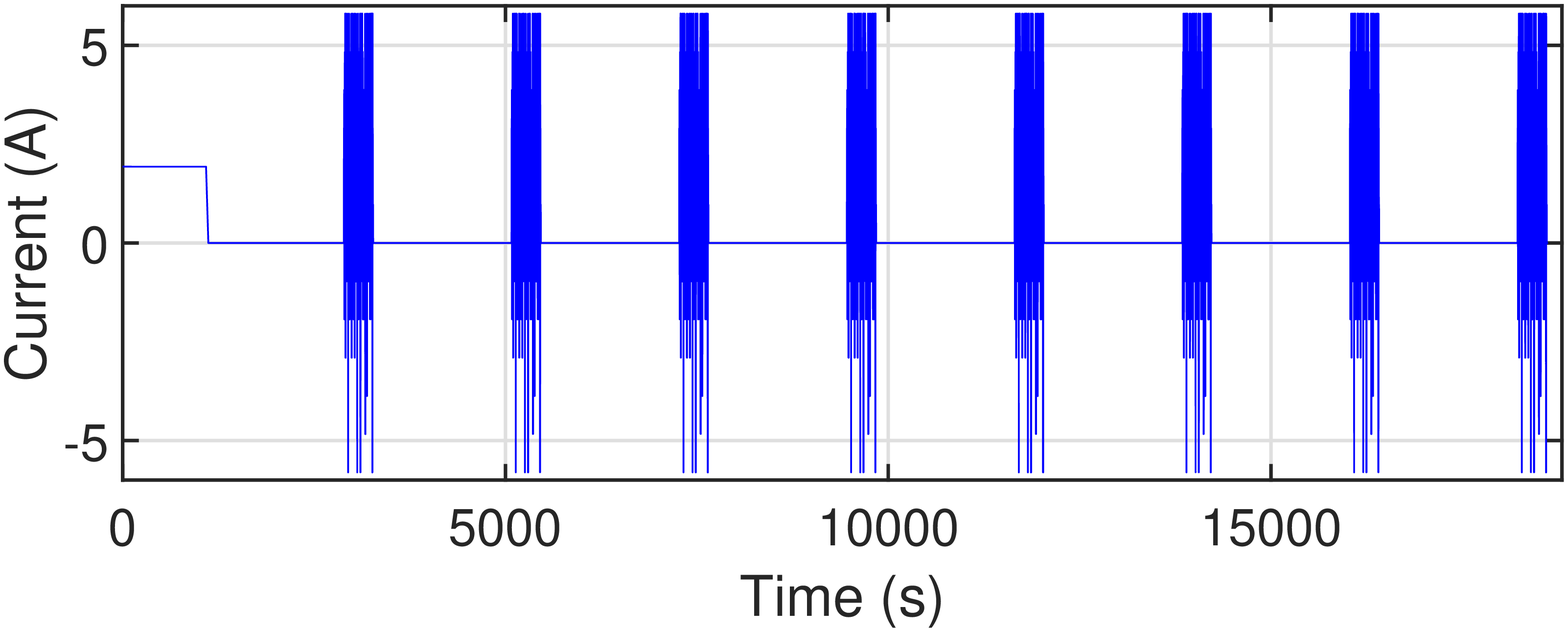}
}
\subfigure[]{
  \includegraphics[width=75mm,height=4.5cm]{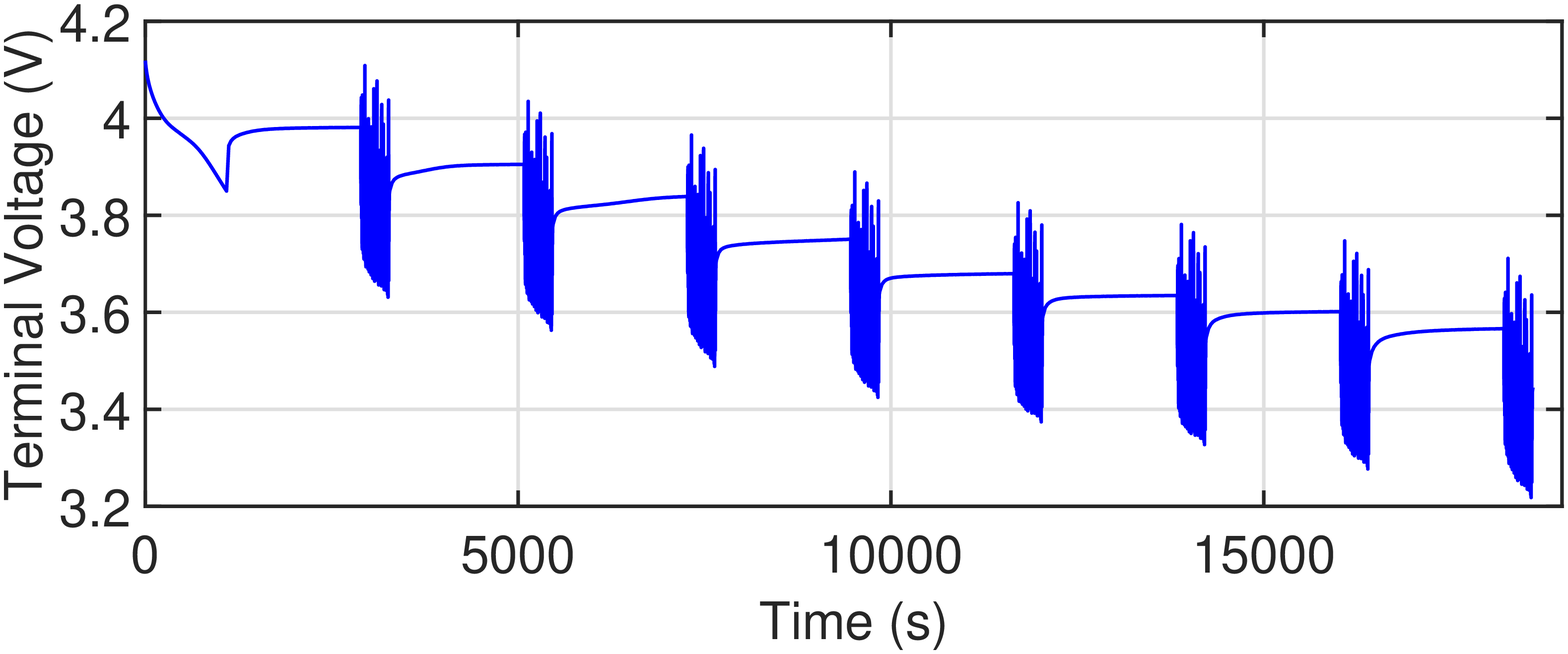}
}
\caption{Input current and terminal voltage of UDDS test for the Li-ion cell.}
\label{FIG_UDDS}
\end{figure}

The dynamic current profile of UDDS test as depicted in Fig. \ref{FIG_UDDS}(a) is considered for the validation of the proposed observer. It represents the effect of high-current excitation on the performance of the Li-ion cell. Since the ECM parameters are derived from a slow current discharge profile, the UDDS experiment is crucial to determine the effectiveness of the proposed method. The same assumption on the ECM parameter is considered for this experiment. The corresponding terminal voltage profile of the Li-ion cell is illustrated in Fig. \ref{FIG_UDDS}(b).

\begin{figure}
\centering
\subfigure[]{
\hspace*{-0.5cm}
  \includegraphics[width=75mm,height=4.5cm]{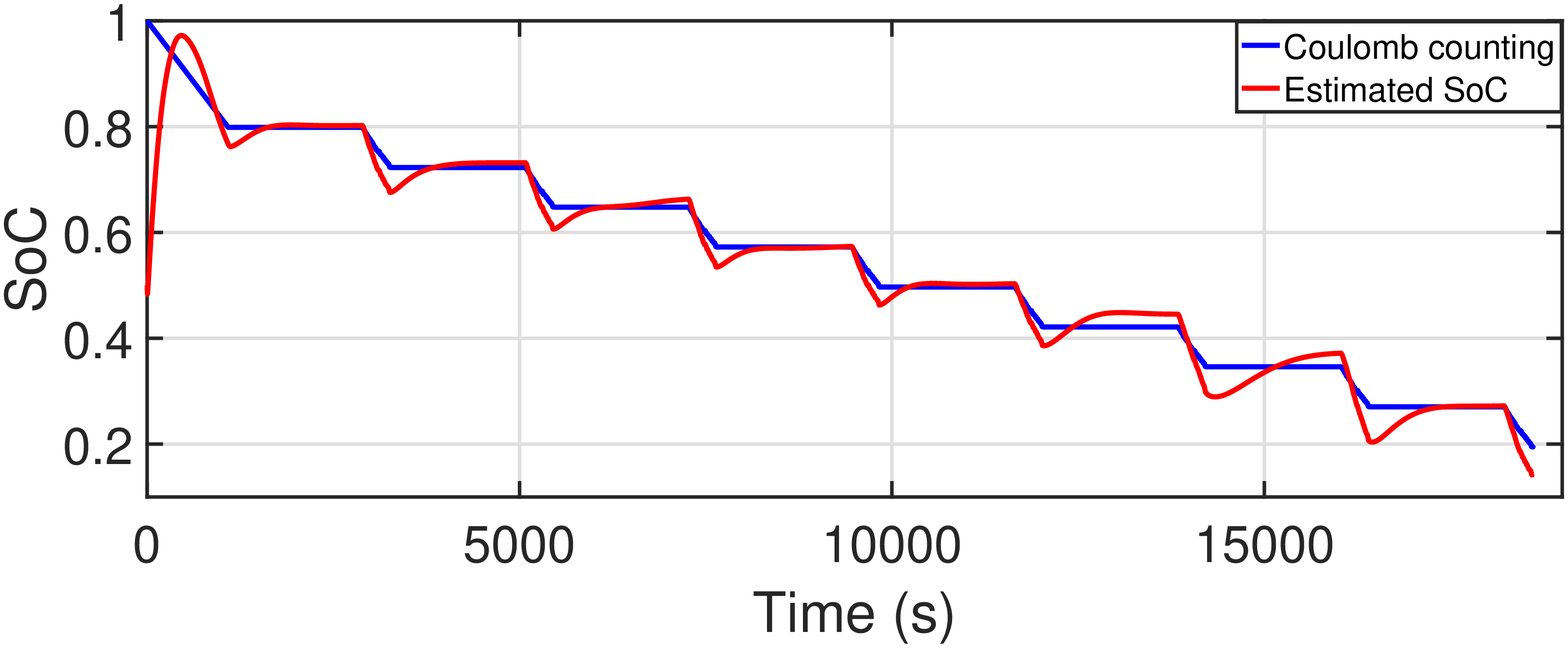}
}
\subfigure[]{
  \includegraphics[width=75mm,height=4.5cm]{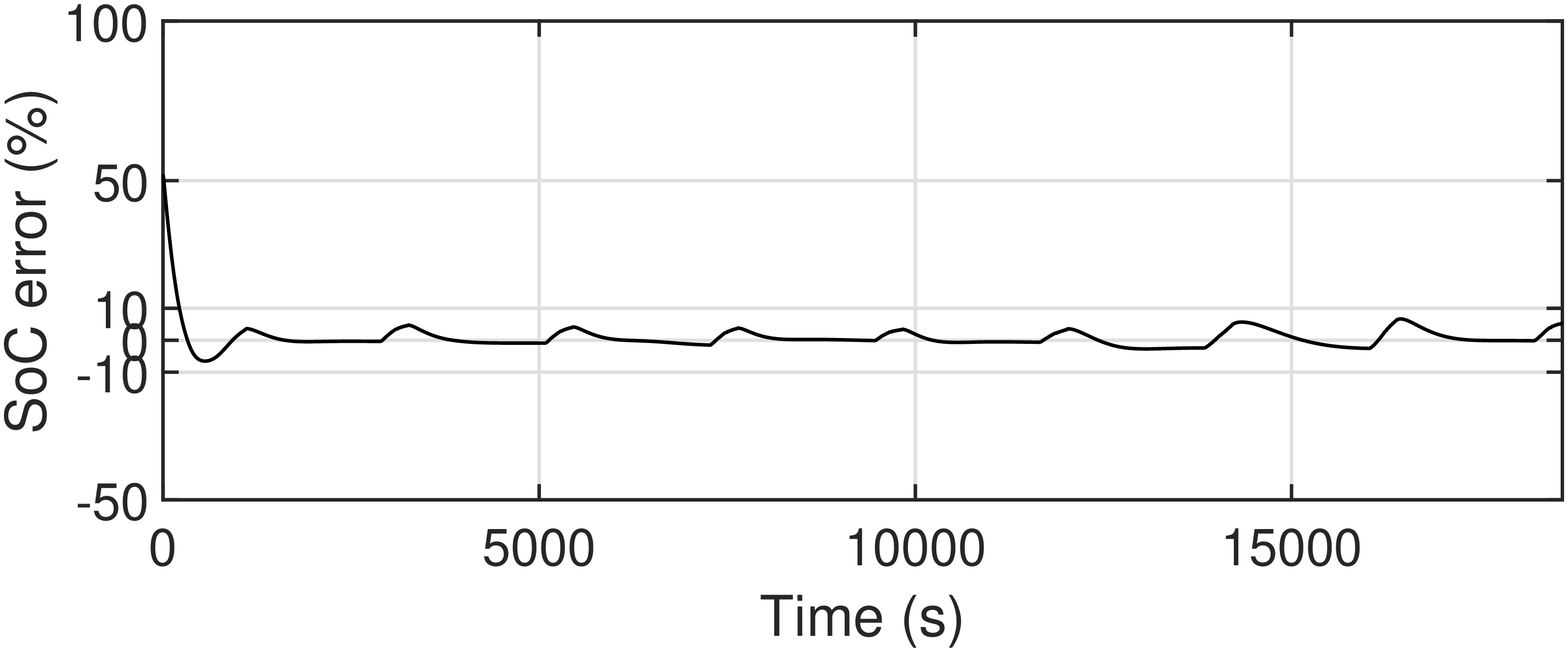}
}
\caption{SoC estimation results for UDDS test.}
\label{FIG_UDDSEXP}
\end{figure}
The experimental results of the SoC estimation are shown in Fig. \ref{FIG_UDDSEXP}(a). It can be observed that the proposed observer, starting from the erroneous initial value converges to the profile of SoC of the Li-ion cell. The corresponding percentage SoC estimation error is confined within the $\pm10\%$ error bound during the entire experiment as shown in Fig. \ref{FIG_UDDSEXP}(b). The result validates the effectiveness of the proposed SoC estimation technique for different operating conditions. It is important to note that percentage of error in SoC estimation for both the HPPC and UDDS tests are relatively higher when SoC is more than 0.5 as illustrated in Fig. \ref{FIG_HPPCEXP}(b) and Fig. \ref{FIG_UDDSEXP}(b). The reason behind it is that the parameters of the ECM corresponds to values identified at SoC=0.1. As a result, these parameters produce significant deviation of the SoC error for SoC more than 0.5.

\begin{table}[!h]
\renewcommand{\arraystretch}{0.1}
\centering \caption{Performance evaluation of the proposed observer.}\vspace{0.1cm}
\label{tab2} 
\begin{tabular}{cccc}
\hline 
\hline
\vspace{0.1cm}
S. No & Experiment & IAE & ISE\\ 
\hline \vspace{0.1cm}
$1$ & HPPC & $0.03351$ & $0.002733$\\ \vspace{0.1cm}
$2$ & UDDS & $0.05227$ & $0.001123$\\ 
\hline
\hline
\end{tabular}
\end{table}

The performance of the proposed AESMO in terms of the evaluation of the SoC estimate in the experiments as discussed before are summarized in terms of performance indices, such as integral absolute error (IAE) and integral square error (ISE) as provided in Table \ref{tab2}. The following inference can be drawn from Table \ref{tab2}: (i) a low value of IAE validates that the overall error in SoC estimation is small, (ii) similarly, a low value of ISE indicates that less estimation error can persist in the design. 

\subsection{Simulation Results}
To further illustrate the advantages of the proposed AESMO, it is compared with an existing robust observer algorithm as reported in \cite{11}. To incorporate modelling inaccuracies and exogenous disturbances, a bounded sinusoidal disturbance amplitude of $\pm5\%$ and frequency of $0.27$ mHz, respectively, is considered. The observer gain for the system in \eqref{sys} using the design method in \cite{11} is computed as $L=[0.0288~-0.0032~3.282\times 10^{-9}~-9.556\times 10^{-5}]^T$. Similarly, the performance of AESMO with an UKF-based SoC estimation technique as in \cite{19} is also investigated here. The UKF is designed for the system with $z(t)$, $V_{RC_1}(t)$ and $V_{RC_2}(t)$ as state variables as provided in \eqref{eq2}-\eqref{eq4}  and $V(t)$ as output in \eqref{eq1}. The parameters of UKF are considered as follows: (i) the initial covariance matrix as diag $(1\times 10^{-12},1\times 10^{-8},1)$, where $diag(a,b,\dots)$ denotes diagonal matrix with elements $a,~b,\dots$, (ii) the additive process noise as diag$(1\times 10^{-8},1\times 10^{-8},1\times 10^{-6})$ and (iii) the unscented transformation parameters as $\alpha=1$, $\beta=2$ and $\kappa=0$. It is important to note that the simulation setting is identical for all the above mentioned SoC estimation techniques. Fig. \ref{FIG_hppccomp} depicts the comparison results of the estimation of $z(t)$ of the Li-ion cell with a highly underestimated guess for initial SoC value under the HPPC test.\newline

\begin{figure}
\centering
	\includegraphics[width=8.5cm]{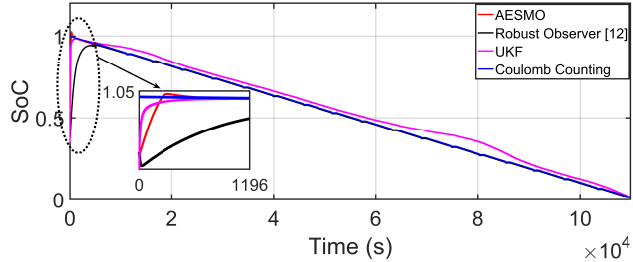}
	\caption{Comparative study of SoC estimation with the proposed AESMO, robust observer in \cite{11} and UKF.}\label{FIG_hppccomp}
\end{figure}
As illustrated in Fig. \ref{FIG_hppccomp}, both the proposed AESMO as well as the robust observer in \cite{11} successfully provide an accurate estimation of the true SoC of the system with high accuracy in the presence of modelling inaccuracies. But the UKF-based algorithm is less accurate as compared to the observer-based algorithms since it requires a precise model for SoC estimation. Moreover, it is evident that the proposed AESMO outperformed the robust observer in \cite{11} in terms of faster convergence. This is due to the design parameter $\mu$ in the observer gain matrix of the sliding mode term in the \eqref{ob}. The parameter $\mu$ provides an additional design advantage in terms of a faster convergence of the observer as compared to \cite{11}.

\subsection{SoC estimation with both current and voltage noise:} Unlike the precise battery testing equipment utilized in the previous experiments, significant noise can exist in the current and voltage measurement channels simultaneously in practical battery systems. Thus, a new experiment is designed with both current and voltage zero mean Gaussian noise signals in this paper. Fig. \ref{FIG_NOISE} illustrates the performance of SoC estimation of the proposed AESMO where $5\%$ current noise and $1\%$ voltage noise is added to the HPPC test data as in \cite{45}. It can be observed that the SoC estimation error stays within $5\%$ which validates the robustness of the proposed AESMO to measurement noise. 

\begin{figure}
\centering
	\includegraphics[width=8.5cm,height=4.5cm]{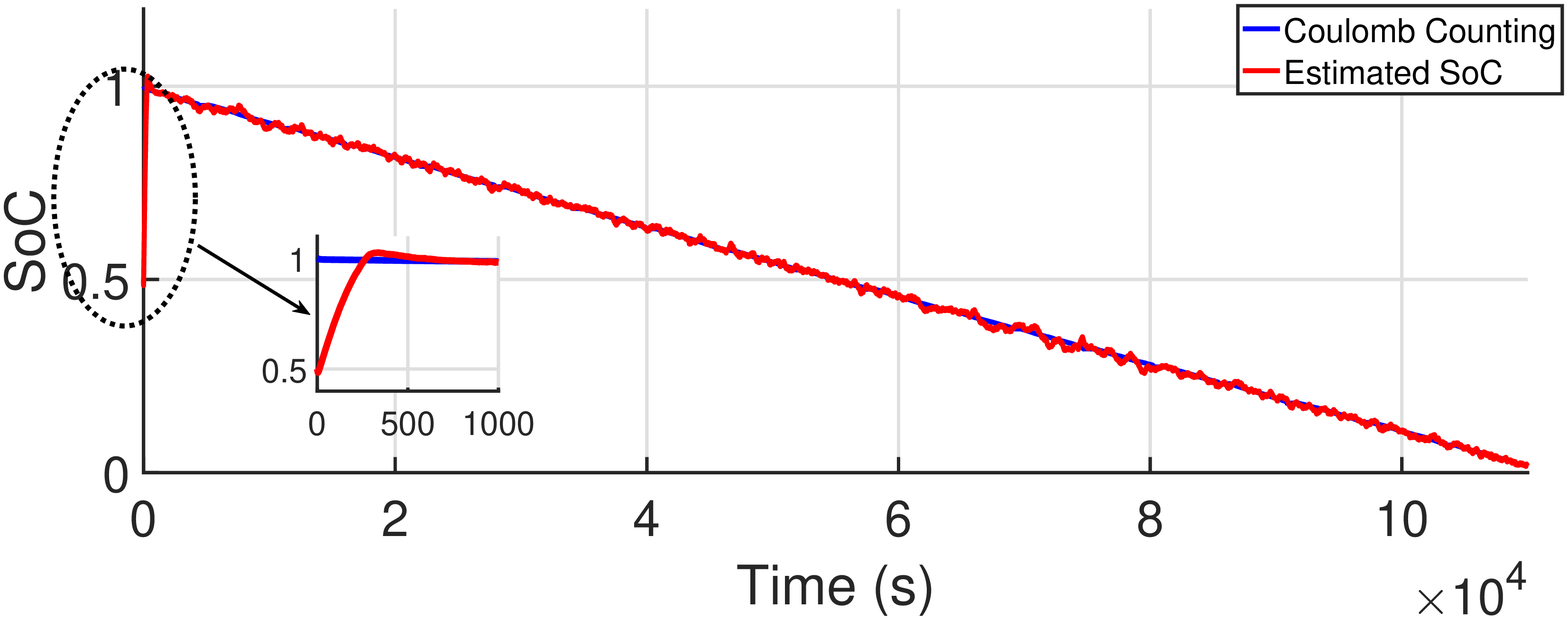}
	\caption{Effect of noise in current and voltage measurement.}\label{FIG_NOISE}
\end{figure}

\subsection{Effect of a large variation in internal resistance of Li-ion cell} The following simulation study is designed to investigate the robust performance of the proposed observer against $\pm20\%$ variations in the internal resistance of the Li-ion cell. It is a very practical consideration since ageing of a cell can drastically change the internal resistance, $R_{int}$, due to repeated usage. 

\begin{figure}
\centering
	\includegraphics[width=8.5cm]{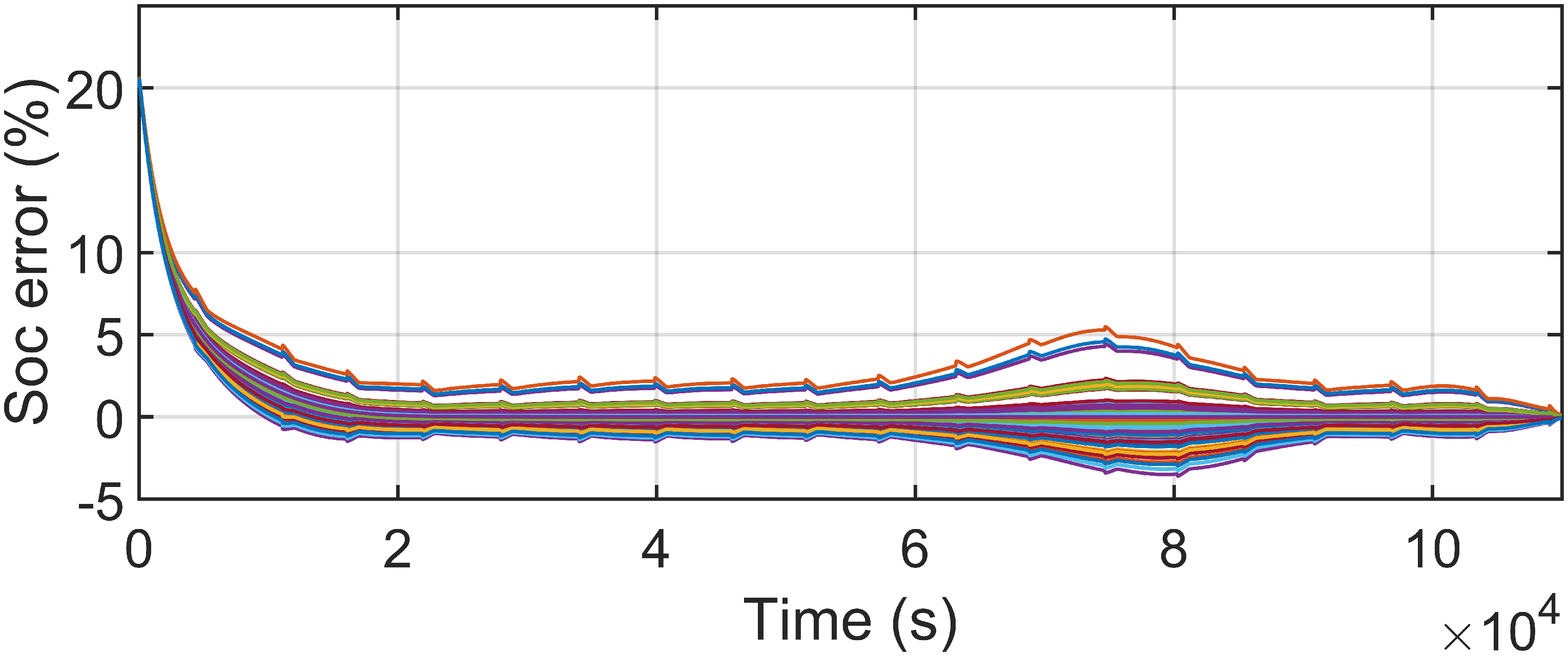}
	\caption{Effect of noise in current and voltage measurement.}\label{FIG_R0}
\end{figure}
 It can be observed from Fig. \ref{FIG_R0} that the percentage SoC estimation error is confined to $\pm5\%$ error band which indicates that the ECM parameter need not be updated when the internal resistance changes by up to $\pm20\%$ after repeated cycling. 
\section{Conclusion}
In this paper, a robust attractive ellipsoid based SMO observer design is proposed for Li-ion cells that provides robust estimates of the SoC in the presence of bounded parametric uncertainty, modelling inaccuracies and exogenous disturbance. Hence, in an uncertain environment, the proposed algorithm can provide reliable SoC estimates. The observer gain was shown to be an optimal feedback parameter for the proposed observer since it is numerically computed by solving a semi-definite programming problem. Furthermore, a faster convergence rate of the observer states is attainable by tuning the design parameter. Due to its simplistic design, it can be easily implemented on low-cost devices. Extensive numerical simulations and experimental studies validate the capability of the proposed observer in estimating SoC in real-time in the presence of noise in the current and voltage measurements, modelling inaccuracies and wide range of parametric variations. \par

Though only a single Li-ion cell is considered, the extension of the proposed technique to the management of a battery pack can be an immediate future work. The accurate state estimation of battery packs is still an open problem due to inconsistent battery pack characteristics and uncertain operating conditions. The proposed technique can be augmented with the existing techniques, such as cell calculation, screening process, and bias correction methods for efficient state estimation of battery packs. Apart from the ECM based methods, the proposed technique can be applied to physics-based models of Li-ion cells in the future.

\bibliographystyle{unsrt}  
\bibliography{references}  






\end{document}